\documentclass[fleqn,12pt]{wlscirep}
\usepackage[utf8]{inputenc}
\usepackage[T1]{fontenc}
\usepackage{cases,soul,booktabs}
\usepackage{amsthm}
\usepackage{lineno}
\usepackage{algpseudocode}
\usepackage{algorithm}
\usepackage{geometry}
\usepackage{array}
\usepackage{booktabs}

\newcounter{algsubstate}
\makeatletter

\makeatother

\newtheorem{theorem}{Result}
\newtheorem{lemma}{Lemma}

\newcommand{\II}{\mathrm{I}}
\newcommand{\diag}{\mathrm{diag}}
\newcommand{\xx}{\mathbf{x}}
\newcommand{\yy}{\mathbf{y}}

\newcommand{\vv}{\mathbf{v}}

\newcommand{\zz}{\mathbf{z}}

\newcommand{\ddd}{\mathrm{d}}
\newcommand{\ZZ}{\mathbf{Z}}
\newcommand{\Am}{\mathbf{A}}
\newcommand{\g}{\mathbf{g}}
\newcommand{\HH}{\mathbf{H}}
\newcommand{\R}{\mathbb{R}}

\newcommand{\one}{\mathbf{1}}
\newcommand{\onek}{\mathbf{1_K}}
\newcommand{\Q}{\mathbf{Q}}
\newcommand{\cc}{\mathbf{c}}
\newcommand{\aaa}{\mathbf{a}}
\newcommand{\bb}{\mathbf{b}}
\newcommand{\nn}{\mathbf{n}}
\newcommand{\YY}{\mathbf{Y}}

\newcommand{\bbeta}{\mathbf{\beta}}
\newcommand{\aalpha}{\mathbf{\alpha}}

\setlength{\parskip}{0.5em}

\title{
Quantum Annealing Continuous Optimisation in Renewable Energy
}

\author[1,*]{Mansour T.A. Sharabiani}
\author[2]{Vibe B. Jakobsen}
\author[2]{Martin Jeppesen}
\author[3]{Alireza S. Mahani}
\affil[1]{Imperial College of Science, Technology and Medicine, London, UK}
\affil[2]{Nature Energy, Odense, Denmark}
\affil[3]{Davidson Kempner Capital Management, New York, USA}

\affil[*]{mansour.taghavi-azar-sharabiani05@imperial.ac.uk}



\setlength{\parindent}{0pt}

\begin{abstract}
Renewable energy optimisation poses computationally-intensive challenges. Yet, often the continuous nature of the decision space precludes the use of many emerging, non-von-Neumann computing platforms such as quantum annealing, which are limited to discrete problems. We propose Quantum Annealing Continuous Optimisation (QuAnCO), a Trust Region (TR)-based algorithm, where the TR Newton sub-problem is transformed into Quadratic Unconstrained Binary Optimisation (QUBO), thereby allowing the use of Ising solvers such as D-Wave's quantum annealer. This transformation to QUBO is done by 1) using a hyper-rectangular shape for the TR, 2) discrete representation of each continuous dimension using an interval-bounded integer, and 3) binary encoding of the resulting bounded integers. We tackle a real-world challenge of optimising the biomass mix selection for Nature Energy, the largest biogas producer in Europe, thus providing evidence of feasibility and performance advantage in using QuAnCO in green energy production, and beyond.  
\end{abstract}


\begin{document}

\flushbottom
\maketitle
%
%

\thispagestyle{empty}


\section{Introduction}\label{sec-introduction}

The biogas industry - producing biomethane using anaerobic digestion (AD) of organic waste - has become an important partner in the global campaign against climate change~\cite{jain2019biogas}, contributing to greenhouse gas reduction via multiple pathways such as replacing fossil fuels, avoiding methane slips from manure, storing carbon in soils, producing green fertilisers and enabling carbon re-use. For a large, multinational biogas producer such as Nature Energy (NE), daily operations involve a multitude of decisions about sourcing, distribution, pre-tank storage and mixing of biomass in co-digestion tanks. Optimising such complex decisions would make a significant impact on the operational efficiency of the biogas plants and hence the long-term economic sustainability of the industry. However, the combinatorial explosion of the decision space in such high-dimensional optimisation problems makes them computationally prohibitive, often exceeding the limits of classical computing platforms.

While research into non-von-Neumann computing frameworks - including Quantum Annealing (QA) - is an active and promising avenue, yet there remains a significant limitation in their applications: QA targets discrete (or combinatorial) optimisation problems, whereas many problems in industrial optimisation, especially those facing renewable energy producers such as NE, are set in continuous spaces. In this paper, we address this mismatch by proposing a novel optimisation algorithm, which we call the Quantum Annealing Continuous Optimisation (QuAnCO).


\textbf{Biomass Selection Optimisation} \quad Much research in biogas production optimisation has been focused on training `black-box' models on the reactor yield data, and using the resulting prediction function in a derivative-free optimisation algorithm~\cite{jha2017renewable}. For example, refs~\cite{qdais2010modeling,kana2012modeling} use Artificial Neural Networks (ANN) for training a black-box model, and Genetic Algorithms (GA) for optimisation. Ref~\cite{jacob2016modeling} compares ANN+GA against response surface methodology using central composite design for data generation.


Biomethane Potential (BMP) experiments offer a practical path to `grey-box' models, where measurements collected in small batch reactors are used to predict biomethane yield in industrial-scale, continuous stirred-tank reactors, sometimes via parameter calibration of `white-box' models of anaerobic digestion (AD) such as as ADM1~\cite{batstone2002iwa,batstone2006review}. Over the years, many parametric forms have been tested and/or recommended for fitting BMP data, including first-order kinetics (or exponential)~\cite{pitt1999use}, modified Gompertz~\cite{zahan2018anaerobic}, cone~\cite{el2013kinetics}, Fitzhugh~\cite{cao2017methane}, and transfer function~\cite{li2015comparison}, among others. (For a recent review, see ref~\cite{pererva2020existing}.) More consistency in execution and reporting of BMP experiments are needed to accurately compare the results and reach consensus on the best model(s) to use under different circumstances~\cite{holliger2016towards,ohemeng2019perspectives,pererva2020existing}. In this paper, we have selected three models for our experiments: exponential, cone, and Cauchy~\cite{pitt1999use}. See Section \ref{subsec-methods-biomass-problem} for details.

\textbf{Nonlinear Optimisation Algorithms} \quad Algorithms for unconstrained optimisation of smooth functions broadly fall under two categories: line search and trust region (TR)~\cite{nocedal2006numerical}. In line search, the movement direction is chosen before the step length. Two important members of this category are BFGS and Conjugate Gradient (CG) algorithms~\cite{nocedal2006numerical}. In contrast, during TR methods the maximum step length is fixed before searching for the direction. TR strategies are considered reliable and robust, and can be applied to non-convex and ill-conditioned problems. It is also easier to establish convergence results for TR algorithms~\cite{yuan2015recent}. 

Trust Region Newton (TRN) - where the model function is a second-order, Taylor-series fit to the cost function - is a natural candidate to adapt for using Ising solvers: 1) the boundedness of the TR (with some modification) permits its representation using bounded integers, 2) the global nature of the algorithm means finding an exact solution in each iteration is not important and thus approximate - e.g., Ising - solvers can be quite effective~[ref~\cite{nocedal2006numerical}, Chapter~4], 3) the quadratic model function can be mapped to the Ising energy with pairwise coupling terms, and 4) the computationally critical component of TRN is solving the TR sub-problem, which is precisely what our proposed QuAnCO algorithm offloads to Ising solvers such as QA.

\textbf{Ising/QUBO Solvers} \quad In recent years, there has been an increased volume of research in alternative computing approaches. An important category is Ising or QUBO solvers, which aim to find the ground state of a coupled, binary system with second-order interactions. One such Ising solver is D-Wave's QA, based on superconducting circuits coupled via Josephson junctions~\cite{johnson2011quantum}. In QA, quantum tunneling is the mechanism used for inducing a non-zero probability for transitioning to higher energy states and escaping local minima. In Simulated Annealing (SA), this effect is achieved via thermal fluctuations~\cite{isakov2015optimised}.

Beyond QA, there are other hardware- and software-based Ising solvers, some referred to as `quantum-inspired', in various stages of research, development and commercialisation. Examples include Hitachi's CMOS annealer~\cite{yamaoka201520k}, Toshiba's Simulated Bifurcation Machine~\cite{goto2019combinatorial}, Fujitsu's digital annealer~\cite{aramon2019physics}, photon-based approaches~\cite{yamamoto2017coherent,pierangeli2019large,roques2020heuristic}, Bose-Einstein condensation~\cite{byrnes2011accelerated}, and FPGA-based Restricted Boltzmann Machines~\cite{patel2020ising}. Such research has also motivated building of high-performance and/or large-scale, classical Ising solvers~\cite{alphaqubo,isakov2015optimised}. Beyond Ising solvers, other notable examples of alternative computing frameworks proposed include neuromorphic computing~\cite{merolla2014million,cai2020power}, and atomic Boltzmann machine~\cite{kiraly2021atomic}. Factors driving the development and adoption of such hardware platforms include speed, connection density, size, data transfer overhead, cost, and energy efficiency.

Active research into new hardware and software for Ising solvers has, in turn, motivated research into finding new applications for Ising solvers~\cite{kochenberger2014unconstrained}, thus creating a virtuous cycle. By proposing our QuAnCO algorithm, we broaden the application of Ising solvers beyond discrete optimisation to include continuous optimisation problems, thus reinforcing the above-mentioned cycle.

\section{Results}\label{sec-results}

\subsection{Bound-Constrained TRN}\label{subsec-trn-bound}

Consider the unconstrained optimisation problem
\begin{equation}
\min_{\xx \in \R^K} f(\xx)     
\end{equation}
where $f(\xx)$ is a twice-differentiable, non-convex cost function. TR algorithms work by iteratively minimising a model function within a neighborhood surrounding the current point, $\xx_0$, called the `trust region'. The model function, $f^*(\xx)$, is often a quadratic approximation to the cost function:
\begin{equation}\label{eq-quad-approx}
    f^*(\xx) = f_0 + \g_0^\top \, (\xx - \xx_0) + \frac{1}{2} \, (\xx - \xx_0)^\top \, \HH_0 \, (\xx - \xx_0)
\end{equation}
where $f_0$ and $\g_0$ are the cost function and its gradient, respectively, evaluated at $\xx = \xx_0$. The matrix $\HH_0$, is either the exact Hessian of $f$ evaluated at $\xx = \xx_0$ (TRN), or an approximation of it (TR quasi-Newton). In each iteration of TR, the following `sub-problem' is solved:
\begin{equation}\label{eq-subproblem}
\min_{\xx} f^*(\xx) \quad s.t. \quad || \xx - \xx_0 ||_2 \leq r,
\end{equation}
where $||.||_2$ is the 2-norm operator, forcing $\xx$ to be inside a hyper-sphere of radius $r$ centered at $\xx_0$. Solving the constrained optimisation problem of Eq.~\ref{eq-subproblem} produces a proposed move, $\xx'$. If the actual improvement in the cost function, $f(\xx) - f(\xx')$, is close to expected improvement, $f^*(\xx) - f^*(\xx')$, the proposed move is accepted, and the TR radius ($r$) may be expanded. Too small of an actual improvement, on the other hand, leads to rejection of the proposed move and shrinking of the TR.

The computational cost of TRN consists of two parts: 1) computing the cost function and its derivatives - especially the Hessian - in order to evaluate the model function according to Eq.~\ref{eq-quad-approx}, and 2) solving the sub-problem of Eq. \ref{eq-subproblem}. Component 1 is dominated by Hessian evaluation and - theoretically - scales like $O(K^2)$ (ignoring implementation effects such as cache-size limits), while the scaling of component 2 can be as bad as $O(K^3)$, e.g., driven by a matrix decomposition step. Furthermore, unlike component 2, component 1 is `embarrassingly parallelisable'~\cite{herlihy2020art}. In summary, `solving the TR sub-problem' is the computational bottleneck of TRN, and the focus of our proposed QuAnCO algorithm.

\textbf{Removing Bound Constraints} \quad The biomass selection optimisation problem (Section~\ref{subsec-biomass-problem}) includes bound constraints:
\begin{equation}\label{eq-box-optim}
\begin{aligned}
\min_{\xx} \quad & f(\xx) \\
\textrm{s.t.} \quad & a_k \leq x_k \leq b_k \quad \forall \, k \in \{1, \hdots, K\}
\end{aligned}
\end{equation}
This includes the general case of a box constraint as well as the special cases of lower bound only ($b_k = +\infty$) and upper bound only ($a_k = -\infty$). To remove bound constraints, we introduce a $K$ element-wise nonlinear functions, $\eta_k()$'s, such that each has a domain $\R$ and a range matching the bounds in Eq.~\ref{eq-box-optim}. The new - unconstrained - optimisation problem is:
\begin{equation}\label{eq-nlp-yspace}
\begin{aligned}
\min_{\mathbf{y} \in \R^K} \quad & F(\mathbf{y}) \equiv f(\eta(\mathbf{y})),
\end{aligned}
\end{equation}
where $\mathbf{\eta}(\mathbf{y})$ is a shorthand for $[ \eta_1(y_1) \,\, \eta_2(y_2) \,\, \hdots \,\, \eta_K(y_K)]$. Applying the chain rule of derivatives, we obtain the following expressions for the gradient ($g_F$) and Hessian ($H_F$) for the transformed cost function:
\begin{subnumcases}{}
    g_F(\mathbf{y}) & = $\mathbf{\eta}'(\mathbf{y}) \circ g_f(\mathbf{\eta}(\mathbf{y}))$ \label{eq-remove-box-constraints-a} \\
    H_F(\mathbf{y}) & = $\mathrm{diag} \left( \mathbf{\eta}''(\mathbf{y}) \circ \nabla f(\mathbf{\eta}(\mathbf{y})) \right) + \left( \mathbf{\eta}'(\mathbf{y}) \, \mathbf{\eta}'^T(\mathbf{y}) \right) \circ H_f(\mathbf{\eta}(\mathbf{y}))$ \label{eq-remove-box-constraints-b}
\end{subnumcases}

where $\circ$ represents the element-wise vector/matrix multiplication, and the derivatives in $\mathbf{\eta}'$ and $\mathbf{\eta}''$ are also interpreted element-wise. See Section \ref{sec-methods} for derivation of the above, as well as specific nonlinear functions used in our experiments.

\subsection{Quantum Annealing Continuous Optimisation (QuAnCO)}\label{subsec-results-qtro}

QuAnCO involves two key changes to TRN: 1) using a rectangular shape (infinity-norm) for the TR, rather than spherical ($2$-norm), followed by 2) transforming the optimisation sub-problem of Eq. \ref{eq-subproblem} into a QUBO and submitting it to an Ising solver.

A symmetric, hyper-rectangular TR centered on $\xx_0$ can be defined as
\begin{equation}
    ||(\xx - \xx_0) \oslash \mathbf{r}||_{\infty} \leq 1,
\end{equation}
where $\oslash$ is the element-wise division, and $\mathbf{r}$ is the vector representing the half-length of the rectangular TR. Within this TR, the continuous vector $\xx$ can be approximated using a vector of bounded integers ($\mathbf{n}$):
\begin{equation}\label{eq-x-n}
    \xx(\nn) = \left( \xx_0 - \mathbf{r} \right) + \mathbf{\delta} \, \circ \, \nn,
\end{equation}
where $\nn \in \{ 0,1,\dots,N \}^K$ and $\mathbf{\delta} \equiv \frac{2}{N} \mathbf{r}$. Next, we use a binary representation for each element $n_k$ in $\nn$:
\begin{equation}
    n_k = \bb^\top \zz_k
\end{equation}
where $\bb$ is the $M$-digit binary basis $\begin{bmatrix} 1 & 2 & \cdots & 2^{M-1} \end{bmatrix}^\top$, and $\zz_k$ is a binary vector of length $M$, representing the range of integers $[0,2^M-1]$. (As such, we require $N+1$ to be a power of two.) To represent the vector $\nn$, we stack $\zz_k$'s row-wise to form a $K \times M$ matrix $\ZZ$:
\begin{equation}\label{eq-n-Z}
    \nn = \ZZ \, \bb.
\end{equation}
Combining Eqs.~\ref{eq-x-n} and \ref{eq-n-Z}, and using the vectorisation or \textit{vec} trick (see Section~\ref{sec-methods}), we obtain:
\begin{subnumcases}{}
    \xx - \xx_0 & = $-\mathbf{r} + \Am \, \zz$ \label{eq-def-x-x0}
    \\
    \Am & = $\bb^\top \otimes \mathrm{diag}(\delta)$ \label{eq-def-A}
\end{subnumcases}
where $\zz$ is a binary vector of length $K \, M$ resulting from stacking columns of $\ZZ$, and $\otimes$ is the Kronecker product. Using the above substitution in Eq.~\ref{eq-quad-approx} leads to the following QUBO (see Section~\ref{sec-methods} for details):
\begin{equation}
    \min_{\zz \in \mathbb{B}^{K M}} \quad \zz^\top \, \Q \, \zz 
\end{equation}
with:
\begin{equation}\label{eq-def-Q}
        \Q = \frac{1}{2} \, \Am^\top \, \HH_0 \, \Am + \mathrm{diag} \left( \Am^\top (\g_0 - \HH_0 \, \mathbf{r}) \right)
\end{equation}
The computational cost of $\Q$ in Eq.~\ref{eq-def-Q} is dominated by the term $\Am^\top \, \HH_0 \, \Am$. Using Eq.~\ref{eq-aHa-2} (Section~\ref{sec-methods}), and noting that each element of $\Delta \, \HH_0 \, \Delta$ can be calculated in $O(1)$ thanks to $\Delta$ being diagonal, the cost of calculating $(\bb \, \bb^\top) \otimes (\Delta \, \HH_0 \, \Delta)$ is $O(K^2 \, M^2)$, which is polynomial in problem size, $K$. Furthermore, if we take advantage of $b_m = 2^{M-1}$, this can be reduced to $O(K^2 \, M)$, since the number of unique elements of $\bb \, \bb^\top$ is reduced from $O(M^2)$ to $O(M)$. In summary, computational time for calculating $\Q$ scales quadratically with problem size.

Denseness of $\Q$ is an important parameter for many Ising solvers, which motivates the following result (proof in Section~\ref{sec-methods}).
\begin{theorem}\label{theorem-denseness}
For a dense but otherwise arbitrary Hessian $\HH_0$, the coefficient matrix $\Q$ given by Eq.~\ref{eq-def-Q} is also dense.
\end{theorem}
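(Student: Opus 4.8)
The plan is to track the sparsity pattern of $\Q$ through the two terms on the right-hand side of Eq.~\ref{eq-def-Q}: I would show that the quadratic term $\frac{1}{2}\Am^\top\HH_0\Am$ already has \emph{every} entry nonzero, and that the additive $\mathrm{diag}(\cdot)$ term can only alter the diagonal, so the dense off-diagonal coupling structure survives.

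First I would rewrite the quadratic term in Kronecker form. With $\Am = \bb^\top\otimes\mathrm{diag}(\delta)$ from Eq.~\ref{eq-def-A}, repeated use of the mixed-product rule $(\mathbf{A}\otimes\mathbf{B})(\mathbf{C}\otimes\mathbf{D})=(\mathbf{A}\mathbf{C})\otimes(\mathbf{B}\mathbf{D})$ (equivalently, Eq.~\ref{eq-aHa-2}) gives $\Am^\top\HH_0\Am = (\bb\bb^\top)\otimes\bigl(\mathrm{diag}(\delta)\,\HH_0\,\mathrm{diag}(\delta)\bigr)$. I would then inspect the two factors entrywise: $(\bb\bb^\top)_{ij}=2^{i+j-2}>0$ for all $i,j$, so $\bb\bb^\top$ has no zero entry; and $\bigl(\mathrm{diag}(\delta)\,\HH_0\,\mathrm{diag}(\delta)\bigr)_{kl}=\delta_k\,\delta_l\,(\HH_0)_{kl}$, which is nonzero because $\delta=\frac{2}{N}\mathbf{r}$ has strictly positive entries (a valid trust region has positive half-lengths $r_k>0$) and $\HH_0$ is assumed dense. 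Since each entry of a Kronecker product is a product of one entry from each factor, all $K^2M^2$ entries of $\Am^\top\HH_0\Am$, and hence of $\frac{1}{2}\Am^\top\HH_0\Am$, are nonzero.

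Next I would add $\mathrm{diag}\bigl(\Am^\top(\g_0-\HH_0\mathbf{r})\bigr)$. This matrix is supported on the diagonal only, so for $i\ne j$ we have $\Q_{ij}=\frac{1}{2}(\Am^\top\HH_0\Am)_{ij}\ne 0$; i.e., the coupling graph of the QUBO is the complete graph on $KM$ nodes, which is the operative notion of denseness for Ising solvers. To finish I would note that the diagonal entries $\Q_{ii}=\frac{1}{2}(\Am^\top\HH_0\Am)_{ii}+(\Am^\top(\g_0-\HH_0\mathbf{r}))_i$ are likewise nonzero for generic $\HH_0$ and $\g_0$ (their vanishing set is a measure-zero hypersurface), so $\Q$ is fully dense.

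The main obstacle — indeed the only subtle point — is the diagonal: the linear-term contribution could in principle cancel a diagonal entry of $\frac{1}{2}\Am^\top\HH_0\Am$. I expect to dispose of this either by the genericity remark above or, more conservatively, by adopting the off-diagonal (coupling) pattern as the definition of denseness, since that is what matters when embedding the QUBO on sparse Ising hardware. A secondary item to state explicitly is the standing assumption $r_k>0$ (hence $\delta_k\ne 0$), without which diagonal scaling could annihilate rows or columns and invalidate the argument.
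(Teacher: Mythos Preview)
Your proposal is correct and follows essentially the same route as the paper: both reduce to the identity $\Am^\top\HH_0\Am=(\bb\bb^\top)\otimes(\Delta\HH_0\Delta)$ (you via the mixed-product rule, the paper via a block-matrix expansion), then argue entrywise that $b_ib_j>0$ and $\delta_k\delta_l(\HH_0)_{kl}\ne 0$, with the diagonal $\mathrm{diag}(\cdot)$ term set aside as irrelevant to off-diagonal density. Your explicit treatment of possible diagonal cancellation and the standing assumption $r_k>0$ is in fact more careful than the paper, which simply focuses on the first term without comment.
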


Discretisation error is not a major concern for QuAnCO since TR algorithms do not require exact solutions to the sub-problem~\cite[Chapter~4]{nocedal2006numerical}. Furthermore, optimising a discretised version of a `smooth' function should produce results that are similar to optimising the exact function, where smoothness can be defined using Lipschitz continuity~\cite{cobzacs2019lipschitz}. In particular, if the first or second derivative of a function are absolute-bounded, true function minimum - potentially lying in-between the discrete-grid nodes - cannot be much lower than the observed minimum. The next result quantifies this notion (proof in Section~\ref{sec-methods}).

\begin{theorem}\label{theorem-discretisation}
Discretisation error resulting from solving the QUBO in Equation~\ref{eq-def-Q}, rather than the underlying quadratic function in continuous space, has an upper bound of $\frac{1}{8} \lambda \, \mathbf{\delta}^\top \mathbf{\delta}$, where $\lambda$ is the maximum of largest positive eigenvalue of $\HH_0$ (if there is one) and zero, and $\mathbf{\delta}$ is the grid resolution vector. \end{theorem}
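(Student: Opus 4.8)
The plan is to bound the discretisation error $D \equiv \big( \min_{\nn} f^*(\xx(\nn)) \big) - \big( \min_{\xx} f^*(\xx) \big)$, where the first minimisation runs over the grid $\{\xx(\nn) : \nn \in \{0,\dots,N\}^K\}$ of Eq.~\ref{eq-x-n} and the second over the full hyper-rectangular trust region. That grid is a subset of the trust region which covers it with coordinate-wise spacing $\delta_k$, and its extreme nodes are exactly the box endpoints $x_{0,k} \pm r_k$; hence $D \ge 0$, and moreover $D \le f^*(\hat\xx) - f^*(\xx^*)$ for \emph{any} grid node $\hat\xx$, where $\xx^*$ denotes a continuous minimiser (which exists, the region being compact). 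So it suffices to produce one grid node $\hat\xx$ with $f^*(\hat\xx) - f^*(\xx^*) \le \frac{1}{8}\lambda\,\ddelta^\top\ddelta$.

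First I would write down the first-order optimality conditions at $\xx^*$. Because the trust region is a box, these decouple coordinate by coordinate, so for each $k$ either (i) $x^*_k$ lies strictly between the two box endpoints and $[\nabla f^*(\xx^*)]_k = 0$, or (ii) $x^*_k$ equals an endpoint $x_{0,k} \pm r_k$, which is itself a grid node. Take $\hat\xx$ to be a grid node nearest to $\xx^*$; case (ii) lets us choose it so that $\hat x_k = x^*_k$ on every type-(ii) coordinate, while $|\hat x_k - x^*_k| \le \delta_k/2$ on type-(i) coordinates.

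Since $f^*$ is exactly quadratic, $f^*(\hat\xx) - f^*(\xx^*) = \nabla f^*(\xx^*)^\top (\hat\xx - \xx^*) + \frac{1}{2}(\hat\xx - \xx^*)^\top \HH_0 (\hat\xx - \xx^*)$. The linear term vanishes term by term: on type-(ii) coordinates $\hat x_k - x^*_k = 0$, and on type-(i) coordinates $[\nabla f^*(\xx^*)]_k = 0$. For the quadratic term, the Rayleigh-quotient bound gives $(\hat\xx - \xx^*)^\top \HH_0 (\hat\xx - \xx^*) \le \lambda_{\max}(\HH_0)\,\|\hat\xx - \xx^*\|_2^2 \le \lambda\,\|\hat\xx - \xx^*\|_2^2$, using $\lambda \ge \lambda_{\max}(\HH_0)$ together with $\lambda \ge 0$ and $\|\hat\xx - \xx^*\|_2^2 \ge 0$. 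Finally, since $\hat\xx$ and $\xx^*$ differ only on type-(i) coordinates, and there by at most $\delta_k/2$, we get $\|\hat\xx - \xx^*\|_2^2 \le \sum_{k=1}^{K}(\delta_k/2)^2 = \frac{1}{4}\ddelta^\top\ddelta$, whence $D \le f^*(\hat\xx) - f^*(\xx^*) \le \frac{1}{8}\lambda\,\ddelta^\top\ddelta$.

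The delicate point is the middle step: it is the first-order optimality of $\xx^*$ that kills the linear term for nearest-node rounding, which is exactly what keeps the bound quadratic (rather than linear) in $\ddelta$; naively rounding an arbitrary feasible point would leave a term of order $\ddelta$. If one prefers to avoid the type-(i)/(ii) case split, an equivalent route is randomised rounding: let $\hat\xx$ have independent coordinates, each equal to one of the two grid nodes straddling $x^*_k$ with probabilities making $\mathbb{E}[\hat\xx] = \xx^*$; then the linear term vanishes in expectation, $\mathbb{E}[f^*(\hat\xx)] = f^*(\xx^*) + \frac{1}{2}\mathrm{tr}(\HH_0\,\mathrm{Cov}(\hat\xx))$, and since $\mathrm{Cov}(\hat\xx)$ is diagonal with entries at most $\delta_k^2/4$ and $\mathrm{tr}(\HH_0 C) \le \lambda\,\mathrm{tr}(C)$ for $C \succeq 0$, the same $\frac{1}{8}\lambda\,\ddelta^\top\ddelta$ bound follows, the discrete minimum being at most this expectation.
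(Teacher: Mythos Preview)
Your proof is correct and follows essentially the same route as the paper: pick the grid node nearest to the continuous minimiser, bound the quadratic term by the top eigenvalue, and bound the squared distance by $\tfrac{1}{4}\ddelta^\top\ddelta$ (the paper packages this last step as a separate hypercube lemma). The one substantive difference is that your KKT-based case split on type-(i)/type-(ii) coordinates disposes of the linear term rigorously even when the continuous minimiser sits on the box boundary, whereas the paper's argument simply asserts that ``minimums happen where the gradient vector is zero'' and so tacitly treats only interior minima; your version is therefore the more complete of the two, and the randomised-rounding variant you sketch is a genuinely different (and elegant) alternative not present in the paper.
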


This result can be used in selecting grid resolution ($\delta$) in each step, using a `natural' scale or resolution ($\Delta f$) for the cost function: $\delta = (8 \Delta f / \lambda)^{1/2}$ (assuming an isotropic grid).

Algorithm~\ref{alg-qtro} summarises QuAnCO. Note that we check for improvement in function value before accepting the proposal, because that the combination of discretisation and using an approximate Ising solver can lead to both the actual and the expected improvements being negative, thus forming a positive ratio, $\rho$. Also, the condition $|| \hat{\mathbf{p}}_k \oslash \mathbf{r}_k ||_{\infty} = 1$ checks whether the proposed step falls on the hyper-rectangular TR boundary.

\begin{algorithm}
\caption{Quantum Annealing Continuous Optimisation (QuAnCO). Inputs are $M$: number of bits used to represent each dimension; $\mathbf{r}_0$: vector of TR initial size; $\mathbf{r}_{max}$: vector of maximum allowable sizes for TR; $\xx_0$: starting value; $f$: cost function; $\epsilon_1,\epsilon_2$: threshold parameter for convergence test.}\label{alg-qtro}
\begin{algorithmic}
\Require $M, \, \mathbf{r}_0, \mathbf{r}_{max}, \, \xx_0, f, \epsilon_1, \epsilon_2$
\State $\mathrm{converged} \gets false$
\State $k \gets 0$
\While{!converged}
\State $\g_k \gets \nabla f(\xx_k)$, $\HH_0 \gets \mathbf{J}(\g(\xx_k))$
\State Solve the sub-problem:
\begin{enumerate}
    \item Calculate $\Q$ according to Eq.~\ref{eq-def-Q}.
    \item Solve the resulting QUBO to obtain a proposed move, $\hat{\zz}_k$ in binary space.
    \item Convert proposed move to continuous space: $\hat{\mathbf{p}_k} \gets - \mathbf{r} + \Am \, \hat{\zz}_k$.
\end{enumerate}
\State Calculate the improvement ratio: $\rho_k \gets \left( f(\xx_k + \hat{\mathbf{p}}_k) - f(\xx_k) \right) / \left( \g_k^\top \hat{\mathbf{p}_k} + \hat{\mathbf{p}}_k^\top \HH_k \hat{\mathbf{p}}_k \right)$.
\If{$\rho_k < 1/4 \,\, \mathrm{or} \,\, f(\xx_k + \hat{\mathbf{p}_k}) > f(\xx_k)$}
    \State $\xx_{k+1} \gets \xx_k$
    \State $\mathbf{r}_{k+1} \gets \mathbf{r}_k / 4$
\Else
    \State $\xx_{k+1} \gets \xx_k + \hat{\mathbf{p}}_k$
    \If{$\rho_k > 3/4 \,\, \mathrm{and} \,\, || \hat{\mathbf{p}}_k \oslash \mathbf{r}_k ||_{\infty} = 1$}
        \State $\mathbf{r}_{k+1} \gets \min(2 \mathbf{r}_k, \mathbf{r}_{max})$
    \EndIf
\EndIf
\State $\mathrm{converged} \gets |f(\xx_k + \hat{\mathbf{p}}_k) - f(\xx_k)| \leq \epsilon_1 \,\, OR \,\, |\g_k^\top \hat{\mathbf{p}_k} + \hat{\mathbf{p}}_k^\top \HH_k \hat{\mathbf{p}}_k| \leq \epsilon_2$
\State $k \gets k + 1$
\EndWhile
\end{algorithmic}
\end{algorithm}

\subsection{Biomass Selection Optimisation}\label{subsec-biomass-problem}

In a co-digestion reactor at an industrial-scale biogas plant, several types of biodegradable material - or biomass - are continuously fed into the reactor where, with the help of bacterial micro-organisms, they undergo a complex, multi-stage process known as anaerobic digestion (AD)~\cite{batstone2002iwa}, at the end of which green energy is produced in the form of biomethane. Biomasses have diverse attributes in terms of the amount and rate of methane production, procurement and transportation costs, and regulatory implications. Figure~\ref{fig-biomass-diversity} (panels A-C) illustrates some of this diversity, using data collected by Nature Energy, the largest producer of biogas in Europe.

\begin{figure}[!h]
\centering
\includegraphics[width=\linewidth]{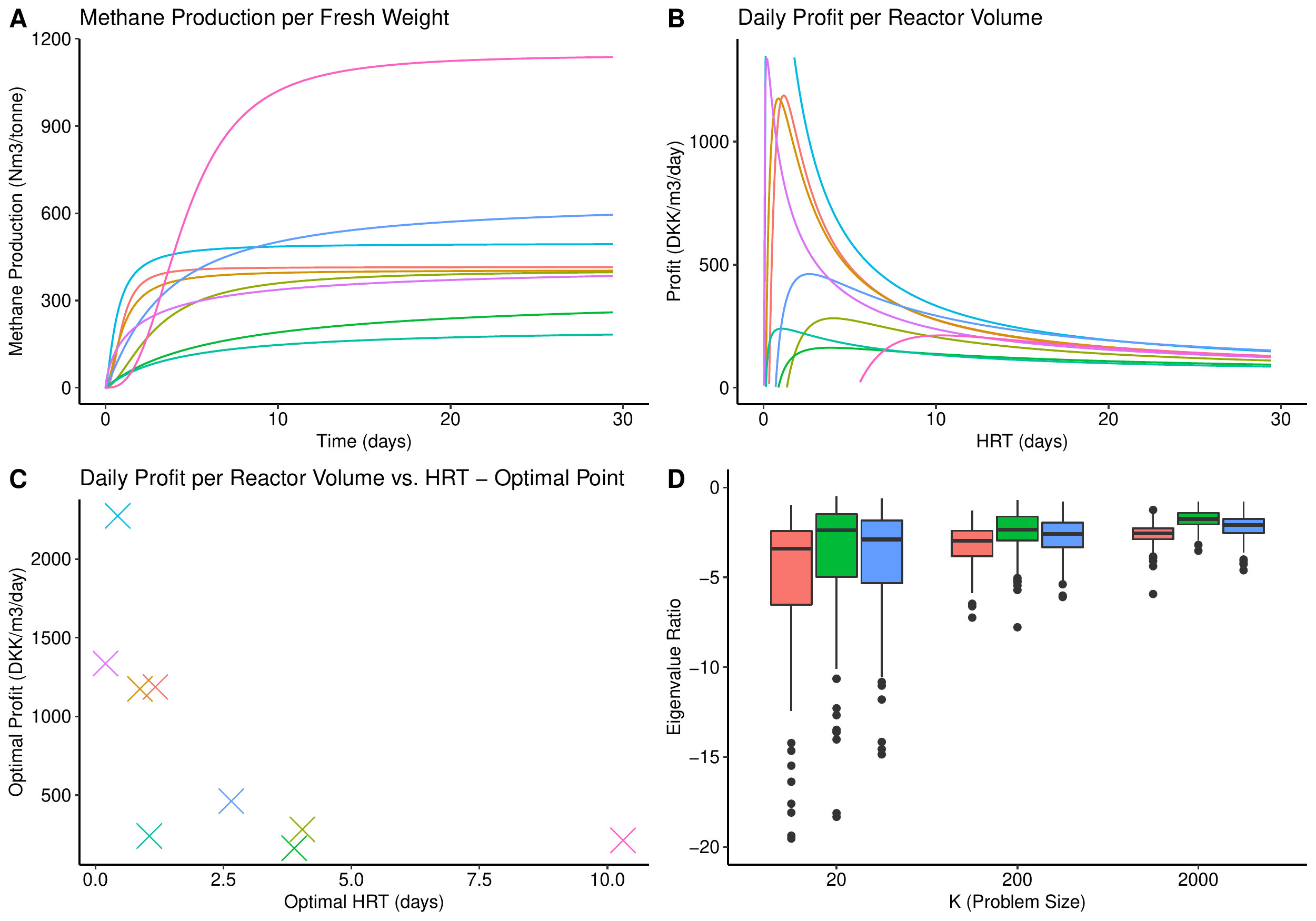}
\caption{Biomass selection optimisation for Nature Energy. \textbf{A}: Sample BMP curves (per fresh weight) biomasses used to generate synthetic data (Section~\ref{subsec-methods-biomass-problem}). \textbf{B}: Profitability (per reactor volume per day) vs. Hydraulic Retention Time (HRT) for the same group of biomasses. \textbf{C}: Scatter plot of maximum daily profit per reactor volume vs. optimal HRT for each biomass. \textbf{D}: Ratio of smallest to largest eigenvalue of the Hessian of the cost function. Each bar is based on 1000 random samples of biomasses ($N$ of 28). For each sample, the first element of $\xx_0$ is chosen to be the univariate optimal value for the first biomass, and the remaining elements are set to zero. Biomass properties have been collected and provided by Nature Energy. Note that the profit numbers (y-axis) in panels \textbf{B} and \textbf{C} have been adjusted with an offset to protect business sensitive information. green: cone, blue exponential, red: cauchy}
\label{fig-biomass-diversity}
\end{figure}

In this paper, we focus on the biomass selection problem in a single biogas reactor, where the goal is to set the daily rate for feeding each biomass into the reactor. We further assume that each type of biomass produces biomethane in the reactor according to the production curves fitted to lab data from biomethane potential (BMP) experiments. The time value used - across all biomass types - to read out the BMP data is the Hydraulic Retention Time (HRT) of the reactor, which is defined as the ratio of the active reactor volume over the total daily volume of biomass flowing through the reactor, and can be interpreted as the - approximate - average time each biomass spends in the reactor. The above assumptions lead to the following expression for the cost function, which is the total daily cost of sourcing the biomass, minus the daily revenue from selling the biomethane produced in the reactor:
\begin{equation}\label{eq-def-cost}
f(\xx) = \xx^\top \, \left(\cc - r \, \YY(X) \right), \quad x_k \geq 0, \quad \forall \, k = 1, \hdots, K.
\end{equation}
where $\xx$ is a vector of length $K$ representing daily flow volume of each biomass into the reactor, $X$ is the total daily feed, i.e., $X \equiv \sum_{k} x_k$, and $\YY \equiv \begin{bmatrix} Y_1(X) & Y_2(X) & \cdots & Y_K(X)\end{bmatrix}$, with $Y_k$'s representing parametric yield functions. (Note that $HRT = V / X$, where $V$ is the total active volume of the reactor, and hence yield functions could also be expressed in terms of reactor $HRT$.) 

The above cost function has two important properties. First, it is non-convex, which means there are many places in the parameter space where its Hessian has negative eigenvalues. Panel D of Figure~\ref{fig-biomass-diversity} illustrates this point. Optimisation of non-convex cost functions, even in the quadratic form, is a non-deterministic polynomial-time (NP)-hard problem~\cite{pardalos1991quadratic}. Second, the minima of the cost function of Eq.~\ref{eq-def-cost} can be identified mathematically, per below.

\begin{theorem}\label{theorem-cost-minima}
For cost function of form~\ref{eq-def-cost}, all minima - global and local - are single-ingredient, i.e., they must satisfy $\sum_k \II[x_k > 0] \leq 1$. $I()$ is the indicator function.
\end{theorem}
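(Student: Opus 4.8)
The whole argument turns on one structural feature of the cost~\ref{eq-def-cost}: the yield vector $\YY$ enters $f$ only through the scalar total feed $X=\one^\top\xx$, so on any slice of the feasible orthant on which $X$ is held fixed, $f$ collapses to a \emph{linear} function of $\xx$. The plan is to exploit this to locate every minimiser inside a low-dimensional face of such a slice.

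First I would dispose of the trivial case $\xx^*=\mathbf{0}$ (single-ingredient by definition) and thereafter fix a local minimiser $\xx^*$ with $X^*:=\one^\top\xx^*>0$. Restrict $f$ to the scaled simplex $\Sigma:=\{\xx\in\R^K:\xx\geq 0,\ \one^\top\xx=X^*\}$; on $\Sigma$ one has $f(\xx)=\xx^\top\big(\cc-r\,\YY(X^*)\big)$, a linear function over the convex set $\Sigma$. Since $\xx^*$ minimises $f$ locally over the orthant it minimises $f$ locally over $\Sigma$, and a local minimiser of a linear functional over a convex set is automatically a global minimiser over it; hence $\xx^*$ globally minimises $\xx\mapsto\xx^\top(\cc-r\YY(X^*))$ on $\Sigma$. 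This forces $\mathrm{supp}(\xx^*)\subseteq\mathcal{A}$, where $\mathrm{supp}(\xx^*)$ is the index set $\{k:x_k^*>0\}$ and $\mathcal{A}:=\{k:c_k-rY_k(X^*)=\min_j(c_j-rY_j(X^*))\}$ is the set of biomasses attaining the minimal net unit cost at feed $X^*$; equivalently, $\xx^*$ lies in the face $X^*\,\mathrm{conv}\{\mathbf{e}_k:k\in\mathcal{A}\}$, and because $\mathrm{supp}(\xx^*)\subseteq\mathcal{A}$ the function $f$ is constant (and minimal) on the whole sub-simplex $X^*\,\mathrm{conv}\{\mathbf{e}_k:k\in\mathrm{supp}(\xx^*)\}$.

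I would then conclude: if $|\mathrm{supp}(\xx^*)|\geq 2$, the sub-simplex $X^*\,\mathrm{conv}\{\mathbf{e}_k:k\in\mathrm{supp}(\xx^*)\}$ is positive-dimensional, lies entirely inside the feasible set, contains $\xx^*$ in its relative interior, and carries a constant value of $f$; hence $\xx^*$ is not an isolated local minimiser, and moreover each single-ingredient vertex $X^*\mathbf{e}_k$ with $k\in\mathrm{supp}(\xx^*)$ is itself a minimiser attaining exactly $f(\xx^*)$. This delivers the theorem in the form actually needed downstream: every isolated (in particular every strict) local minimiser — hence any isolated global minimiser — satisfies $\sum_k\II[x_k>0]\leq 1$, and whenever a minimiser exists a single-ingredient one exists with the same value, so the $K$-dimensional non-convex problem~\ref{eq-def-cost} reduces to $K$ one-dimensional searches over single-ingredient feeds.

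The step I expect to be the genuine obstacle is the single loose end above: truly \emph{non-strict} multi-ingredient local minimisers can occur precisely when the tie-set $\mathcal{A}$ has more than one element, and excluding them outright is not a consequence of the form~\ref{eq-def-cost} by itself. Closing this would either restrict attention to isolated minima (as above), or invoke a mild genericity hypothesis on the yield-curve parameters of the exponential/cone/Cauchy families — namely that no two distinct single-ingredient cost curves $g_k(X):=X(c_k-rY_k(X))$ attain equal values at a point where both are critical — which, combined with the first- and second-order optimality conditions at $\xx^*$ (these force, for every $k\in\mathrm{supp}(\xx^*)$, that $X^*$ is a critical point of $g_k$ with $g_k(X^*)=f(\xx^*)$), then yields $|\mathcal{A}|=1$. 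Verifying that genericity, or ruling out the coincidence directly from the explicit curve forms, is where the bulk of the remaining work lies.
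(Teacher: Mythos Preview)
Your core idea --- that on any fixed-$X$ slice the cost collapses to a linear functional of $\xx$, so no multi-ingredient point can be a strict minimiser --- is exactly the paper's. The paper phrases it as an explicit descent step rather than an appeal to convexity: pick the index $k_1$ maximising $rY_k(X)-c_k$ over the current support, shift mass from the other active ingredients into $k_1$ while holding $X$ fixed, and compute $df\leq 0$. That constructed direction is precisely the vertex-seeking move your linear-on-a-simplex argument invokes abstractly, so the two proofs are the same argument in different clothing.

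Where you differ is in your care about the tie case, and here you are \emph{more} rigorous than the paper. The paper jumps from $df\leq 0$ to ``therefore $\xx$ cannot be a local minimum,'' which is exactly the leap you flag as a loose end: when two or more biomasses tie for the minimal net unit cost at $X^*$, a multi-ingredient point on the corresponding face is a genuine (non-strict, non-isolated) local minimiser, and the theorem as literally stated does not exclude it. Your weaker conclusion --- that every isolated minimiser is single-ingredient, and that in any case a single-ingredient minimiser exists with the same value --- is the correct statement, and it is all that is needed for the paper's downstream use (reducing the $K$-dimensional search to $K$ one-dimensional ones). You need not pursue the genericity argument: the ``single-ingredient representative always exists'' form already suffices.
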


\begin{proof}
We use proof by contradiction. Consider a multi-ingredient point $\xx$, i.e., $\sum_k \II[x_k > 0] = m > 1$. Assume that $k_1$ is the ingredient with the largest value of $r \, Y_k(X) - c_k$ among the set of non-zero ingredients, $U = \{k_1,k_2,\hdots,k_m\}$. Consider a small step $\ddd \xx$ such that $d x_k = 0, k \notin U$ \& $d x_k < 0, k \in U, k \neq k_1$ \& $d x_{k_1} = - \sum_{k' \in U, k' \neq k_1} d x_{k'} > 0$. Such a step keeps $X$ constant, i.e., $dX = \sum_k d x_k = 0$. Furthermore, since the expression inside the parentheses in Eq.~\ref{eq-def-cost} remains constant, the change in $f$ is given by $df = \sum_k (c_k - r \, Y_k(X)) \, d x_k$, which can be re-arranged to have $df = \sum_{k' \in U, k' \neq k_1} \left\{ (c_{k'} - r \, Y_{k'}(X)) - (c_{k_1} - r \, Y_{k_1}(X)) \right\} \, d x_{k'}$. The very definition of $k_1$ means all expressions in parentheses are non-negative, and since all $d x_k$'s are negative, we must have $df \leq 0$. Therefore, $\xx$ cannot be a local minimum of $f$. As such, $f$ will have - at most - $K$ local, single-ingredient minimums, one or more of which will be the global minimum.
\end{proof}

To put it in less mathematical terms, any multi-ingredient solution can be improved upon by replacing all sub-optimal ingredients with the optimal one while keeping the HRT - or equivalently $X$ - constant. Since the optimal ingredient is only a function of HRT, we can continue this substitution until we have made a full switch to the optimal biomass. The crucial requirement is that the ranking of biomasses does not change as we change $\xx$ while keeping $X$ constant. A sufficient condition for this to happen is to have constant marginal cost ($\cc$) for all biomasses, e.g., no volume discounts or supply limits.

In summary, thanks to Result~\ref{theorem-cost-minima}, the problem of minimising the $K$-dimensional cost function in Eq.~\ref{eq-def-cost} is reduced to minimising $K$, one-dimensional cost functions, which is computationally feasible. As such, the utility of Result~\ref{theorem-cost-minima} is that, by effectively providing the theoretical location of function minimums, it allows for accurate benchmarking of optimisation algorithms including QuAnCO.


The biomass data - methane potentials and costs - used in experiments of Section~\ref{subsec-experiments} are simulated using multivariate distributions fitted to real data from NE. This setup allows us to produce realistic data - and hence test various optimisation algorithms - over a wide range of problem sizes, while - thanks to Result~\ref{theorem-cost-minima} - we always know the true minimum. See Section~\ref{sec-methods} for details.




\subsection{Experiments}\label{subsec-experiments}

D-Wave offers two QAs for public access~\cite{DWave2021Topology}: 1) the \emph{DW-2000Q} system (version 6) with 2,048 qubits connected in the `chimera' pattern, and 2) the \emph{Advantage} system (version 1.1) with 5,760 qubits, connected in the `pegasus' pattern. \textit{Advantage} has both higher qubit count and more connections per qubit than \textit{DW-2000Q}, thus allowing for embedding of larger and denser matrices with shorter chain lengths.

In order to test QuAnCO for larger problem sizes, we also implemented and used SA as an annealing-based proxy for QA. Our implementation follows Ref~\cite{isakov2015optimised}, with a few improvements: using memory-aligned data structures better suited for dense matrices, batch random number generation, and multi-core parallelisation of independent samples drawn in each iteration.

The key idea behind annealing in optimisation is to allow for state transitions to higher energy (or cost) in early iterations. Such transitions - which help with escaping shallow, local minima - are reduced in frequency as the algorithm progresses, such that it becomes a greedy, downhill search towards the end. In simulated annealing (SA), the source of noise is (simulated) thermal fluctuations~\cite{van1987simulated}, while in quantum annealing (QA), it is quantum tunneling ~\cite{morita2008mathematical}.

We also developed a high-performance `brute-force' or `exact' QUBO solver in C, which enumerates all permutations of the high-dimensional, binary decision space. This solver can be used in relatively small-size problems, e.g., $K \leq 25$. Using the exact Ising solver allows us to isolate the approximation effect of QA and SA solvers.

In the experiments below, we begin by comparing TRN against BFGS and CG. This is followed by comparing QuAnCO - using exact, QA and SA Ising solvers, in that order - against TRN. Experimental setup including data generation, tuning parameters for optimisation algorithms, and metric definitions can be found in Section~\ref{sec-methods}. Supp Mat includes results for quantum-inspired Ising solvers from Hitachi and Toshiba.

\textbf{Trust Region Newton} \quad Figure~\ref{fig-trn-vs-bfgs-main} compares the performance of TRN against BFGS and CG in terms of 1) solution quality, i.e., closeness of final results to true minimum (panels A-F), and 2) speed, i.e., total execution time to produce the final results (panel G). In terms of quality, we have TRN > CG > BFGS, while the ranking is reversed for speed. (All algorithms including BFGS perform better for smaller problems; see Supp Mat.) In other words, faster algorithms produce lower-quality solutions. In particular, it is encouraging to see that TRN is nearly risk-free compared to CG (and BFGS), i.e., it hardly performs significantly worse than CG, while often showing significant improvement (panels D-F). Also, Panel H of Figure~\ref{fig-trn-vs-bfgs-main} shows the percentage of time in TRN spent on solving the sub-problem. We see that the percentage is high for all three models, and increases with problem size, an indication of unfavourable scaling of this component vs. the Hessian calculation component.

\textit{Takeaways:} 1- Improving TRN speed could make it an attractive algorithm for unconstrained optimisation problems, 2- It is natural to give higher priority to improving the speed of the sub-problem solving step, as done in QuAnCO.

\begin{figure}[!h]
\centering
\includegraphics[width=\linewidth]{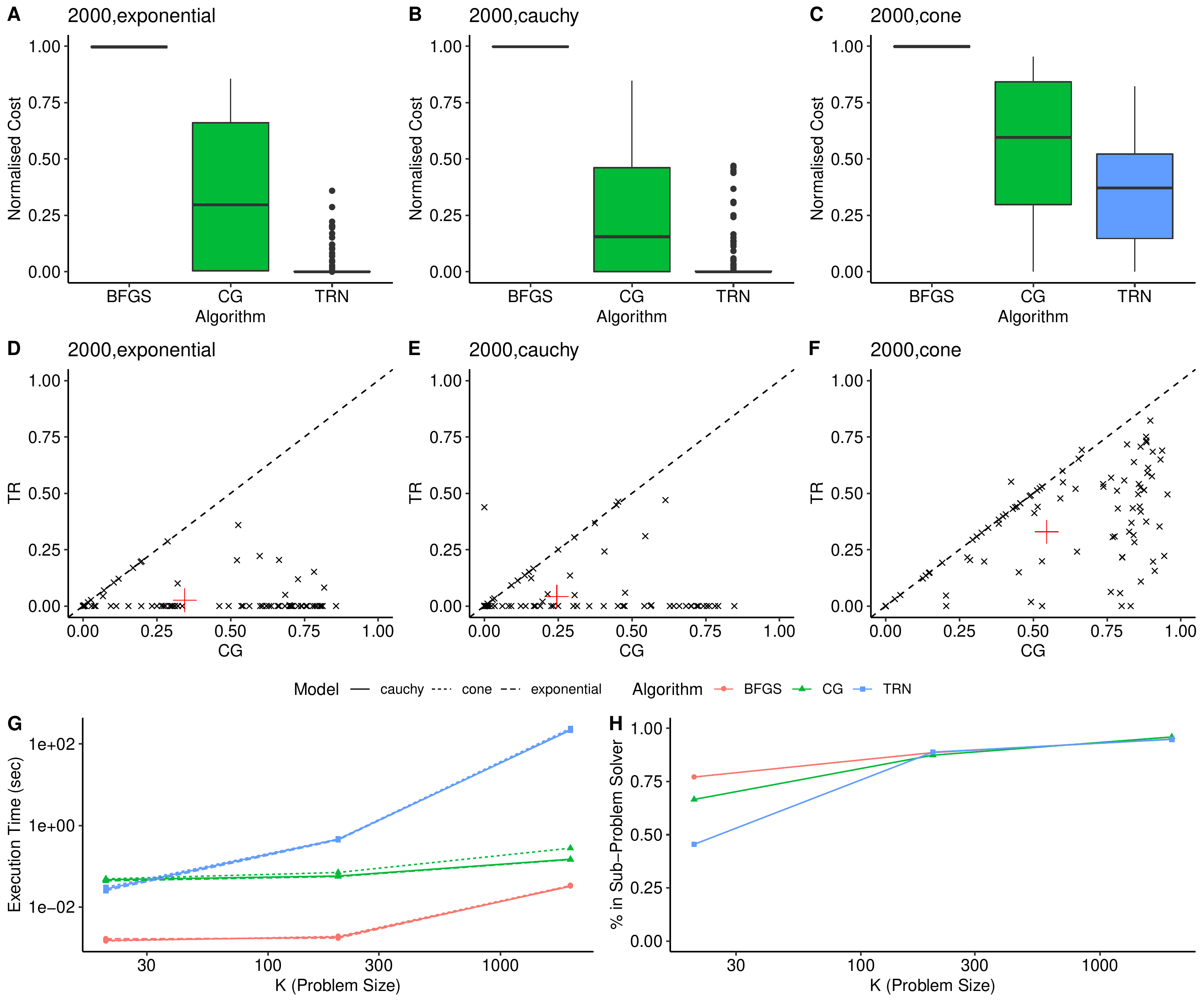}
\caption{Performance comparison of TRN vs. BFGS and CG. For all algorithms a maximum of 100 iterations were allowed (or convergence). \textbf{A-C}: Box plot of normalised costs for $K=2000$ and three BMP models. \textbf{D-F}: Normalised-cost cumulative probability curves, $K=2000$. \textbf{G}: Average execution time for the three algorithms and BMP models, with $K=20,200,2000$. \textbf{H}: Percentage of TRN time spent on solving the TR sub-problem for all three BMP models, $K=20,200,2000$.}
\label{fig-trn-vs-bfgs-main}
\end{figure}

\textbf{QuAnCO-Exact} \quad Figure~\ref{fig-trn-vs-qtro-brute} compares TRN against QuAnCO using an exact Ising solver (`QuAnCO-Exact'). As panel A shows, after the first few iterations, QuAnCO-Exact using $M=1,2,3$ performs better than TRN, though $M=1$ has a slower convergence than $M=2,3$. Using Table~\ref{table-qtro-exact}, we can see that, for example, at iteration 100 and $K=20$, QuAnCO-Exact produces a solution that is, on average, $\sim$9\% closer to the true minimum (measured from starting point) compared to TRN (p-value: $4\times10^{-5}$). This could be a very meaningful contribution to the bottom line of a renewable energy producer. As panel D shows, this improvement is nearly risk-free in this instance, i.e., the QuAnCO-Exact solutions are never meaningfully worse than TRN, but in many cases it offers a significant improvement over TRN.

\textit{Takeaways:} 1- For small problems where execution time is not a deciding factor, it may be worthwhile to use the QuAnCO-Exact algorithm instead of - or in conjunction with - TRN to help improve the quality of optimisation solutions, 2- We are encouraged by our observation that QuAnCO-Exact-1 produces nearly the same quality solutions as $M=2,3$. The ability to use smaller $M$'s is important as it leads to reduced time needed to compute $\Q$, less memory needed to store it, and a smaller data transfer overhead between CPU and the Ising-solver hardware. It also allows us to use QuAnCO-Exact for larger problems. (Recall that the $\Q$ matrix has dimensions $KM \times KM$.)

\begin{figure}[!h]
\centering
\includegraphics[width=\linewidth]{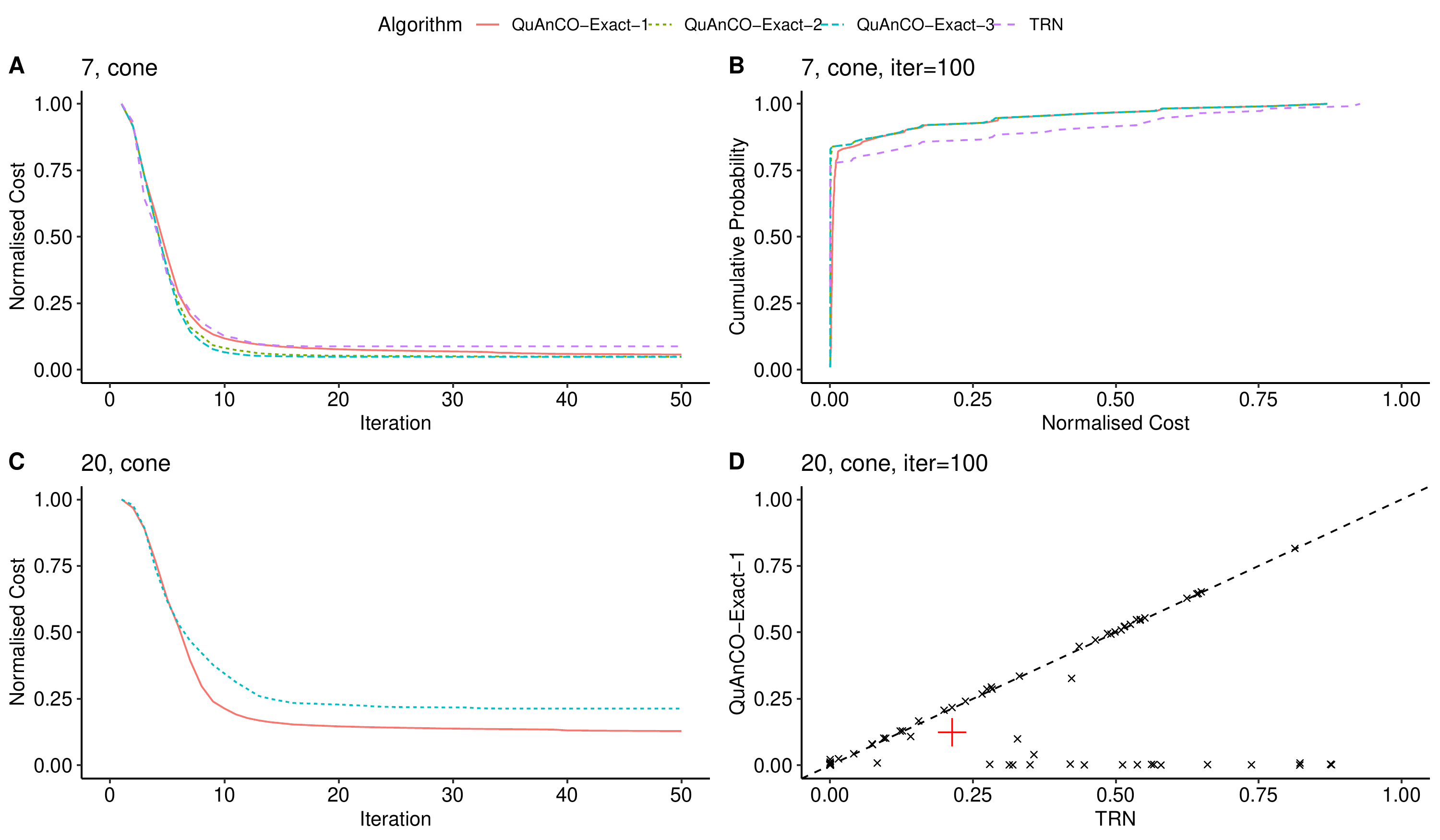}
\caption{Performance comparison of TRN vs. QuAnCO-Exact. Left panels show mean normalised cost vs. iteration number. Top Right: Cumulative probability curve for mean normalised cost for $K=7$ and three values of $M=1,2,3$. Bottom Right: Final mean normalised cost (iter 100) for QuAnCO-Exact-1 vs. TRN. The red `+' in panel D shows mean normalised cost at iteration 100.}
\label{fig-trn-vs-qtro-brute}
\end{figure}

\begin{table}[!h]
\centering
\begin{tabular}{rrrrrr}
\hline
    & & & \multicolumn{3}{c}{QuAnCO-Exact} \\
  \cline{4-6}
iter & K & TRN & M=1 & M=2 & M=3 \\ 
  \hline
10 &  3 & 11.3 & 11.7 & 8.5 & 7.4 \\ 
  10 &  5 & 11.2 & 11.2 & 8.0 & 6.2 \\ 
  10 &  7 & 12.7 & 11.7 & 8.2 & 6.5 \\ 
  10 & 20 & 34.5 & 21.4 &  &  \\ 
  100 &  3 & 10.1 & 7.0 & 6.7 & 6.7 \\ 
  100 &  5 & 10.3 & 5.5 & 5.1 & 5.0 \\ 
  100 &  7 & 8.8 & 5.3 & 4.8 & 4.8 \\ 
  100 & 20 & 21.3 & 12.5 &  &  \\ 
   \hline
\end{tabular}
\caption{Comparison of mean normalised cost for TRN vs. QuAnCO-Exact (using $M=1,2,3$ bits per dimension) after 10 and 100 iterations for $K=3,5,7,20$, using the cone model.}
\label{table-qtro-exact}
\end{table}

\textbf{QuAnCO-QA} Solving a QUBO on D-Wave's QAs requires choosing many tuning parameters, some of which are hardware specific. A few such parameters are listed and described in Supp Mat. In addition to the hardware controls, an important algorithmic parameter - induced by the need to embed a dense matrix in a sparse graph - is `chain strength', which sets the relative importance of the penalty term in the QUBO cost function that biases all physical qubits within each chain to have the same value (hence representing the same logical bit).

Figure~\ref{fig-default-perf} shows the effect of changing chain strength on the performance of a single-step QuAnCO using the \textit{Advantage} (left) and \textit{DW-2000Q} (right) devices. For both plots, a chain strength multiplier of 1.0 corresponds to leaving the chain strength value calculated using the D-Wave heuristic unchanged. We see that the optimal value for both devices is significantly smaller than 1.0. Based on this - albeit limited - experiment, we chose a value of 0.008 for the remaining experiments. Interestingly, at this `optimal' value of chain strength, we observe a significant rate of chain break, i.e., when different physical qubits representing the same logical bit do not have the same value, thus requiring some form of chain conflict resolution such as majority voting. We also note that the performance of the newer, \textit{Advantage} device is more sensitive to the chain strength multiplier, and on average does not match the older, \textit{DW-2000Q} device.

\begin{figure}[!h]
\centering
\includegraphics[width=500pt]{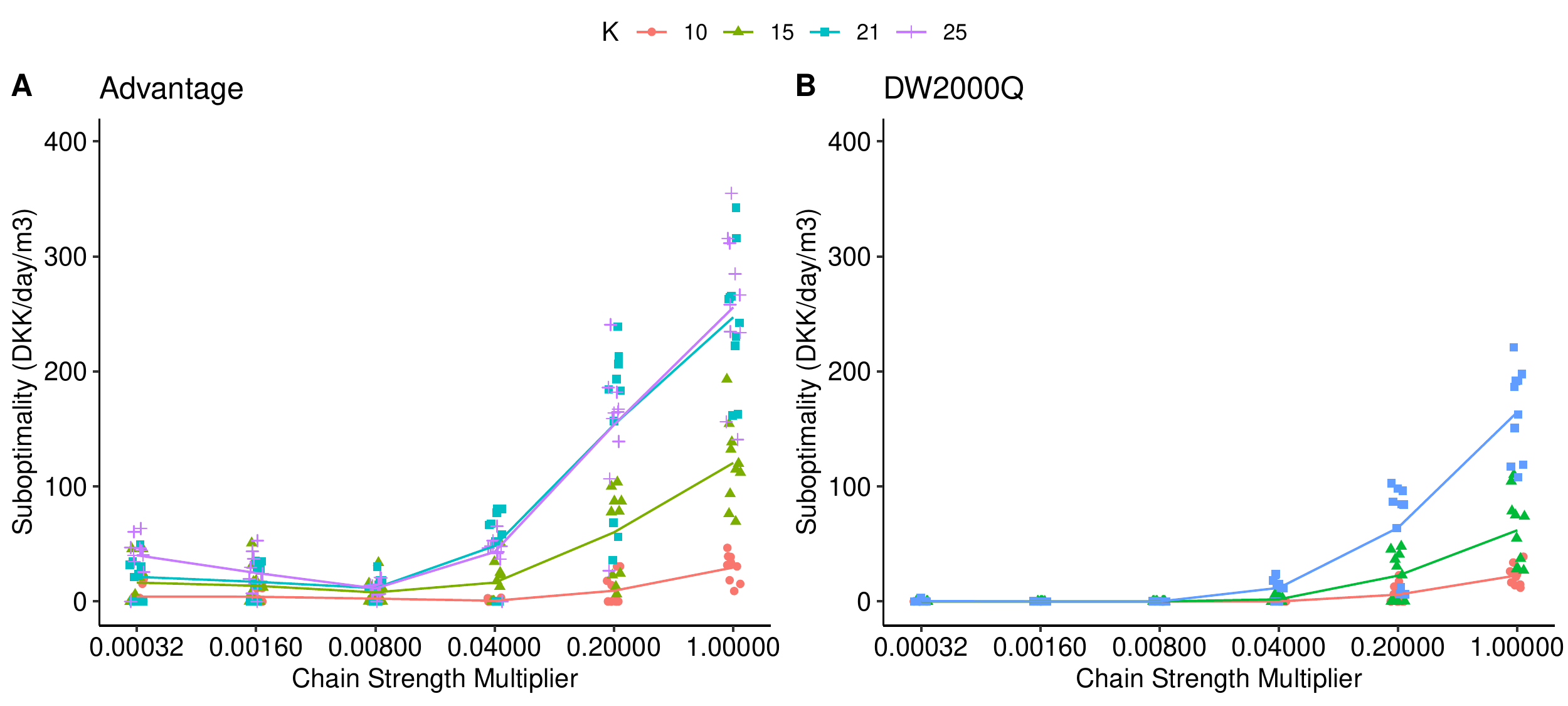}
\caption{Performance of \textit{Advantage} (left) and \textit{DW-2000Q} (right) QPUs for a range of problem sizes ($K=10,15,21,25$) and chain strength multipliers (sequence of $5^u$ with $u$ being an integer between $-5$ and $0$), using a single iteration of the QuAnCO algorithm. We used the `scaled' function to calculate the baseline chain strength parameter. For a given problem size, ten problems were used across all values of chain strength multiplier. Lines represent average of 10 runs for each problem size. We used $M=3$ bits per dimension for discretisation; therefore, problem sizes tested are 30, 45, 63, and 75 bits, with the last value tested for \textit{Advantage} only due to size limitation of \textit{DW-2000Q}. All other tuning parameters are left to their default values from D-Wave, and can be found in Table~\ref{table-dwave-parameters}.}
\label{fig-default-perf}
\end{figure}

Figure~\ref{fig-dwave-vs-trn-vs-sa} shows a comparison of TRN and QuAnCO-QA, using the \textit{minorminer} and \textit{clique} embedding libraries from D-Wave, applied to \textit{Advantage} and \textit{DW-2000Q} QPUs. All four combinations of QPU and embedding library show reasonable performance (and without much tuning). To our knowledge, these experiments provide the first evidence for feasibility of using quantum annealing in solving continuous optimisation problems.

\begin{figure}[!h]
\centering
\includegraphics[width=\linewidth]{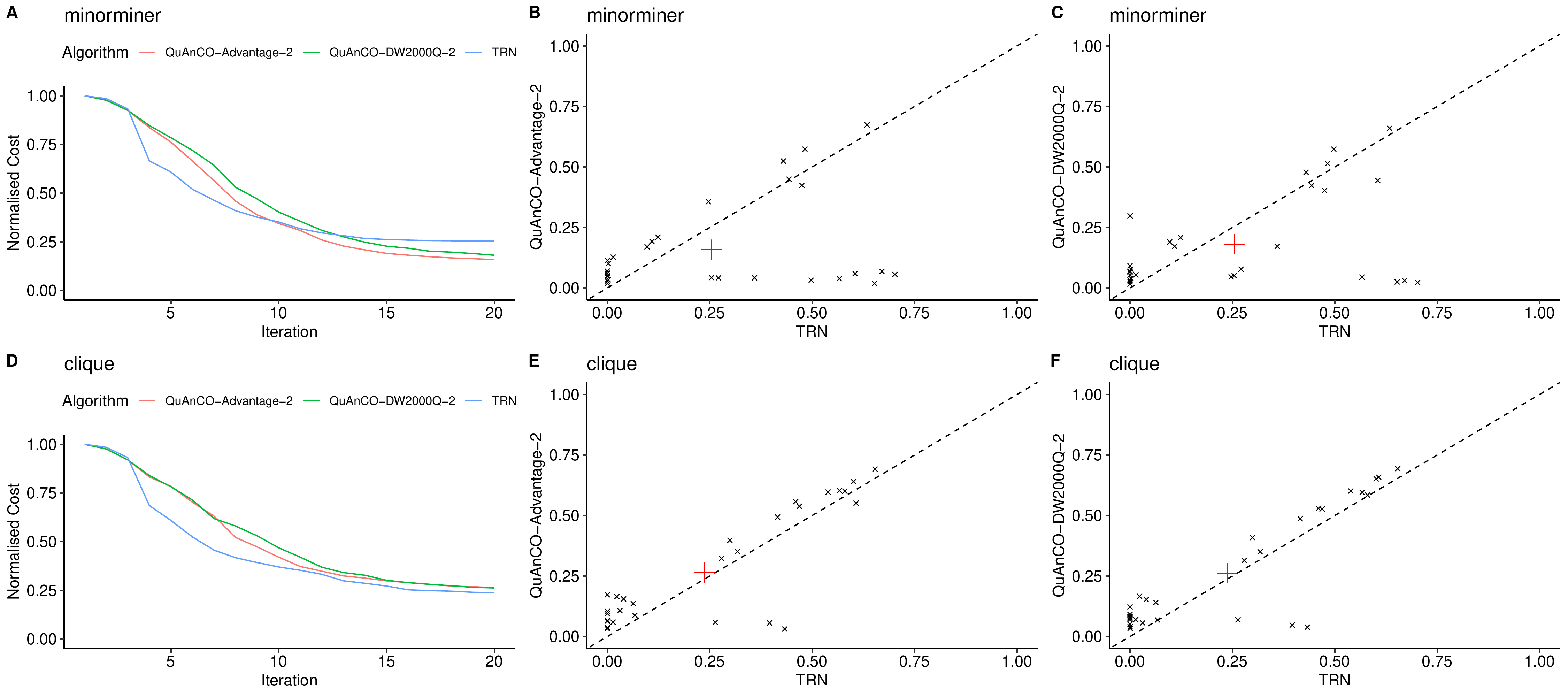}
\caption{Comparison of solution quality between TRN and QuAnCO-QA (using \textit{Advantage} and \textit{DW2000} devices, $M=2$ bits per dimension, and $K=30$. Top Row: Using \textit{minorminer} embedding library. Bottom Row: Using \textit{clique} embedding library. Left Column: Mean normalised costs (over 30 runs) for first 20 iterations. Middle and Right Columns: Final mean normalised costs (iteration 20) for QuAnCO-QA vs. QuAnCO-SA using both QPUs from D-Wave.}
\label{fig-dwave-vs-trn-vs-sa}
\end{figure}

We note that the performance gap between the two QPUs that we observed in Figure~\ref{fig-default-perf} (single iteration) is not evident in Figure~\ref{fig-dwave-vs-trn-vs-sa} for the multi-iteration experiment. A simple explanation is that our single-iteration experiments were conducted using version 1.1 of the \textit{Advantage} device, while the multi-iteration experiments used the updated version 4.1 of the QPU. It is also likely that the experiments conducted near the global minimum are more sensitive to noise and limited dynamic range of the QPUs, compared to experiments that start far from the global minimum. 

Aside from chain strength, adjusting other tuning parameters of these devices could improve their performance. Figure~\ref{fig-advantage-sensitivity} shows an example of the impact of these parameters on performance of the \textit{Advantage} device, using a single iteration of QuAnCO. (These experiments were conducted using the v1.1 of the device.) Note that `number of samples' and 'annealing time' both show a significant impact on performance. The impact of the first parameter is self-explanatory. Slowing down the annealing process by increasing annealing time helps the process better approximate the ideal, adiabatic process~\cite{morita2008mathematical}.

\begin{figure}[!h]
\centering
\includegraphics[width=\linewidth]{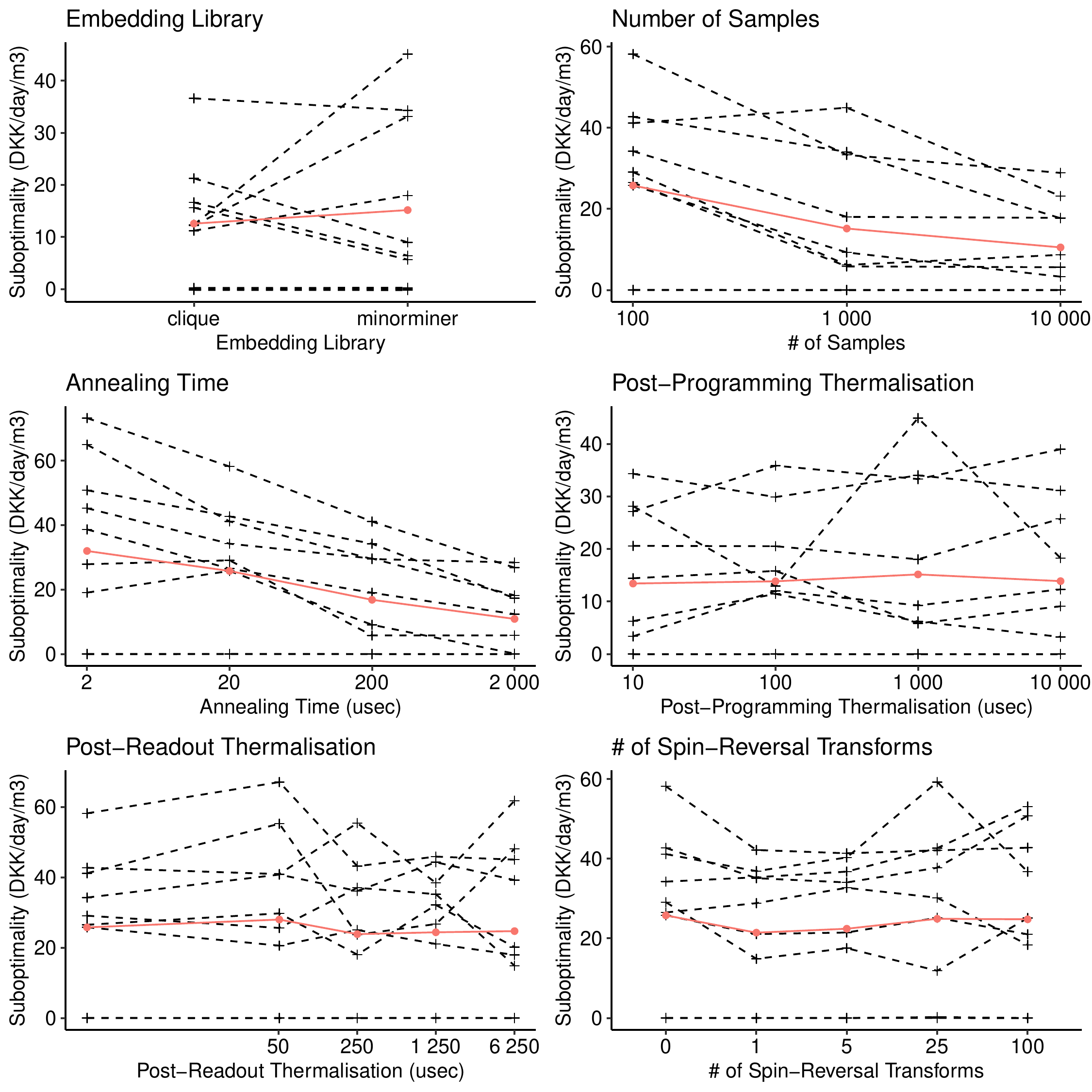}
\caption{Effect of various control parameters of D-Wave's \textit{Advantage} QPU on its performance for a single iteration of QuAnCO. Same ten problems were used in all figures. Number of biomasses ($K$) is 21, and number of digits ($M$) is 3, i.e., 63 logical bits. Chain strength multiplier was set to 0.008, based on results shown in Figure~\ref{fig-default-perf}.}
\label{fig-advantage-sensitivity}
\end{figure}

There are several types of overhead involved in using D-Wave's remote QAs for QuAnCO, including data transfer over the internet, queuing of QA jobs and device preparation. In current settings, the first two components can be multiple times larger than the core computation time. For example, while the so-called `QPU access time' for the Advantage device is less than 30msec (when drawing 100 samples per QA job), yet the end-to-end time to execute a single iteration of QuAnCO could reach as high as 30sec, i.e., a 1000x overhead.

In terms of how QPU access time scales with problem size, any extrapolation from current problem sizes ($K \leq 100$) would be highly speculative, especially since there will likely be tradeoffs between time and quality, as is the case with `annealing time' and `number of samples', shown in Figure~\ref{fig-advantage-sensitivity} to improve solution quality.

\textit{Takeaways:} 1- Despite limited tuning, QuAnCO-QA shows performance that is comparable to TRN in terms of solution quality, 2- The current setup of running QA via exchanging data with a shared resource over the internet imposes significant overhead, and prevents any speedup for an iterative algorithm such as QuAnCO, 3- More experiments - using larger and more densely-connected QAs - are needed to determine the scaling of the core annealing time for large problem sizes.

\textbf{QuAnCO-SA} \quad In order to test the feasiblity of QuAnCO for larger problem sizes, we use SA as an annealing-based proxy for QA. Unlike the exact solver, which scales exponentially with problem size, SA scales quadratically. Figure~\ref{fig-trn-vs-qtro-sa} compares QuAnCO-SA against TRN for $K=20,200,2000$, up to 50 iterations. For binarisation, $M=1,2$ bits per dimension were used. For the most difficult problem of the three (cone model), QuAnCO significantly outperforms TRN. For the other two models, TRN slightly outperforms QuAnCO.

\begin{figure}[!h]
\centering
\includegraphics[width=\linewidth]{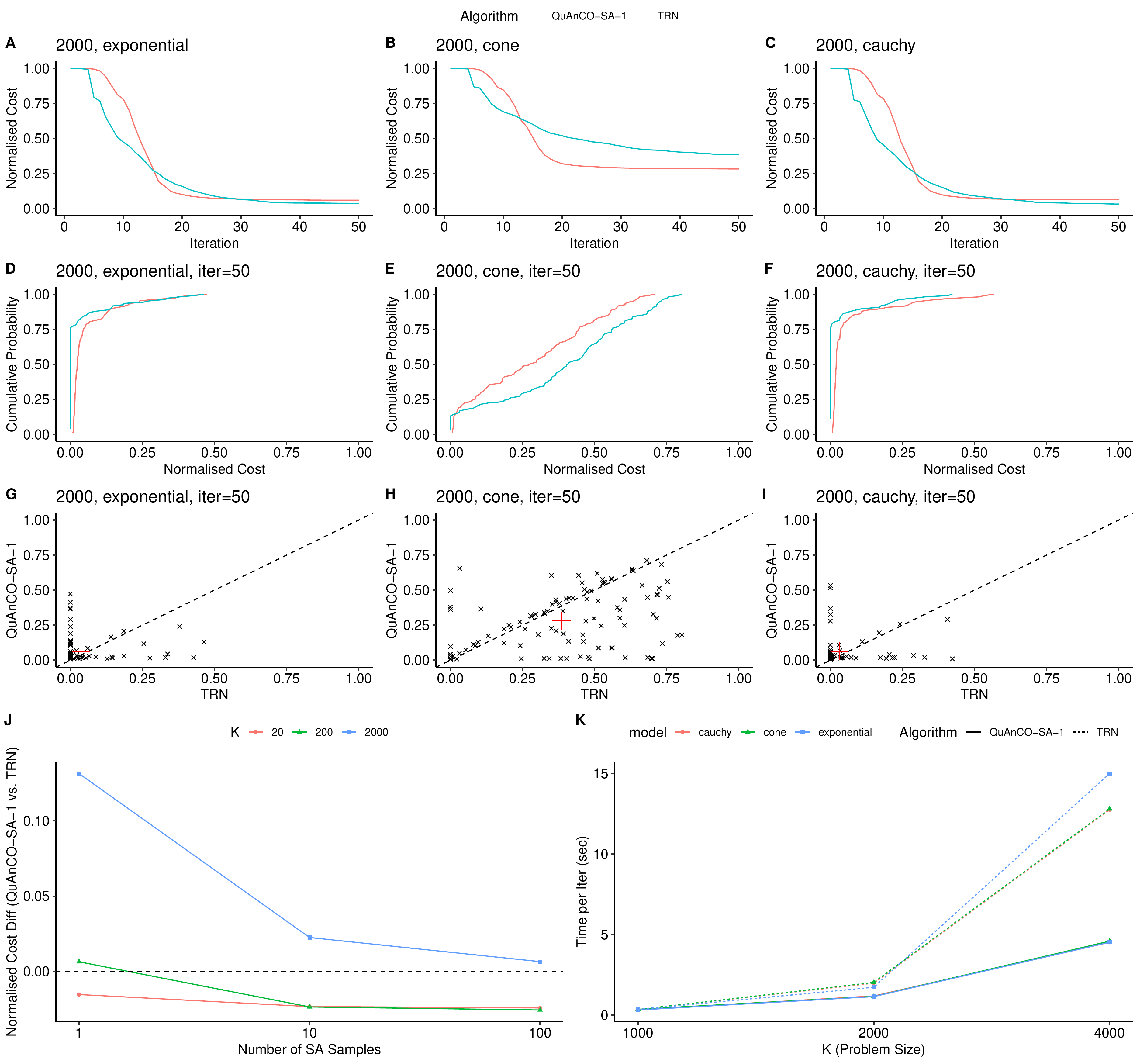}
\caption{Performance comparison of QuAnCO-SA-1 and TRN. Panels A-C: Normalised cost for first 50 iterations for three biomethane yield models and $K=2000$. Panels D-F: Cumulative probability plots vs. normalised cost, calculated at iteration 50. Panels G-I: Scatter plots of final normalised cost (at iteration 50) for QuAnCO-SA-1 vs. TRN for all three models. Panel J: Final mean normalised cost (iteration 50) for QuAnCO-SA-1 minus TRN, using 1,10,100 independent samples oer iteration for the SA-based Ising solver in QuAnCO. Panel K: Average time per iteration - calculated over the first 10 iterations - for QuAnCO-SA-1 vs. TRN for $K=1000,2000,4000$, and all three models.}
\label{fig-trn-vs-qtro-sa}
\end{figure}

An important setting in SA is the number of independent samples drawn per iteration (similar to QA), which was set to 10 in Figure~\ref{fig-trn-vs-qtro-sa} (panels A-I). The effect of changing this parameter is shown in Figure~\ref{fig-trn-vs-qtro-sa}, panel J, for the exponential model. We see that as problem size is increased, the effect of sample size becomes more prominent. Also note that SA samples can be drawn in parallel, e.g. via multi-threading on multicore CPUs, or using Graphic Processing Units (GPUs). This would allow us to improve the performance of QuAnCO-SA while keeping wall time nearly constant.

Panel K of Figure~\ref{fig-trn-vs-qtro-sa} compares time per iteration for QuAnCO-SA-1 and TRN. We have used 10 cores to parallelise the SA step. We see that scaling of TRN with problem size is significantly worse than QuAnCO-SA. This is because sub-problem solving in this TRN implementation~\cite{geyer2020trust} is dominated by eigen-decomposition of the Hessian, which scales like $O(K^3)$. The actual scaling is better - close to $O(K^{2.5})$ - since this computation is only needed after accepted proposals. On the other hand, QuAnCO-SA is dominated by SA, which scales like $O(K^2)$.

Taking together all the results shown in Figures~\ref{fig-trn-vs-qtro-sa}, we see that, for our renewable-energy optimisation problem, our current implementation of QuAnCO-SA outperforms TRN both in solution quality for a given iteration count, and also in time per iteration for large problems. Since can use only 1 bit per dimension, there is no overhead induced for calculating and movement of $\Q$. Also, note that we used 10 samples in SA, partly due to limitations of available cores on our test machine. Given that SA is nearly perfectly parallelisable, on systems with higher number of available cores, one can keep the execution time nearly constant while improving solution quality using a higher sample and core count.

\textit{Takeaways:} 1- QuAnCO-SA offers evidence of performance (combined speed and quality) advantage in using QuAnCO for large-scale, continuous optimisation problems, especially in difficult problems where TRN performs poorly, 2- Using massively-parallel hardware such as GPU, and HPC techniques, QuAnCO-SA can be further improved to offer a viable, short-term bridge to QA-based solutions that are larger, have denser connectivity, and are more accessible. 

\section{Conclusion}\label{sec-discussion}

Motivated by the current gap between QA research that is focused on discrete optimisation, and the continuous nature of many important optimisation problems in renewable energy, we have proposed the QuAnCO algorithm for optimisation of complex, non-convex cost functions in continuous spaces, using QA. Experiments using the biomass selection problem from Nature Energy confirmed the algorithmic feasibility of QuAnCO. The clear advantage of QuAnCO over TRN (and hence CG and BFGS methods) for the most difficult of the three cost functions studied (using the cone model) suggests that it may offer similar advantage when applied to even more realistic cost functions, e.g., those based on the ADM1 and other white-box models of AD~\cite{batstone2002iwa}.

Further research into QuAnCO using QA requires hardware advances (more qubits and/or higher connectivity) and better access modes (dedicated devices for research, co-location of CPU and QPU). In the near term, classical Ising solvers such as SA can offer performance advantage for large problems, especially after further enhancements such as GPU parallelisation, or single-instruction, multiple-data (SIMD) parallelisation on the CPUs\cite{mahani2015simd}. In addition, Quantum-inspired solvers such as Toshiba's SBM~\cite{goto2019combinatorial} may offer another path to the near-term application of QuAnCO. Near-term alternatives such as SA and SBM can also be used to conduct further research into QuAnCO, using other, real-world continuous optimisation problems from renewable energy and beyond.

Besides the strategy of applying element-wise nonlinearities to eliminate bound constraints (Section~\ref{subsec-trn-bound}), other solutions have been proposed and researched for handling bound constraints - as well as linear and nonlinear equality and inequality constraints - in TR algorithms. This includes a combination of penalty functions, Lagrangian methods, active-set strategies, sequential quadratic programming and relaxed linearisation~\cite{vardi1985trust,byrd1987trust,omojokun1989trust}. In Supp Mat, we outline how a combination of quadratic penalty terms and element-wise nonlinear transformation introduced in Section~\ref{subsec-trn-bound} can be used towards this end.

Another useful extension would be to include non-differentiable cost functions, which may arise in simulation-based systems. A natural avenue to explore is fitting quadratic response surfaces to the data~\cite{hao2018model}, which would automatically handle the first step in QuAnCO's sub-problem solver (see Supp Mat). An important decision in such an approach would be choosing the location and number of function evaluations to use for fitting the response surface.

In some applications, a single iteration of QuAnCO could be sufficient, e.g., when any adjustments to current operating conditions of an industrial or business process must be small and incremental due to risk-averseness. This could be, for instance, due to uncertainties in extrapolating the various black-box models that feed into the cost function beyond their historical range of parameters. In other words, the permissible search radius around the current point in the parameter space may be small enough to justify a quadratic approximation.

We believe that the QuAnCO algorithm presented in this paper takes a significant step towards utilising alternative computing platforms and hardware such as QA for solving important optimisation problems, including those related to green energy production. We envisage a rapid expansion and application of this framework, supported by improvements in various hardware technologies and their broader accessibility.

\section{Methods}\label{sec-methods}

\subsection{Quantum Annealing Continuous Optimisation (QuAnCO)}\label{subsec-methods-qnlp}

\textbf{Sub-Problem QUBO} \quad Combining Equations \ref{eq-x-n} and \ref{eq-n-Z}, we get:
\begin{equation}
    \xx = \aaa + \Delta \, \ZZ \, \bb
\end{equation}
Using the `vec' trick, $\mathrm{vec}(A \, B \, C) = (C^\top \otimes A) \, \mathrm{vec}(B)$, we can express $\xx$ as a linear function of $\zz = \mathrm{vec}(\ZZ)$, where $\mathrm{vec}(\ZZ)$ of the $K$-by-$M$ matrix $\ZZ$ is a vector of length $K \times M$ resulting from stacking the $M$ columns of $\ZZ$. Taking note that $\Delta \, \ZZ \, \bb$ is a vector of length $K$, and thus $\Delta \, \ZZ \, \bb = \mathrm{vec}(\Delta \, \ZZ \, \bb)$, we have:
\begin{equation}
    \Delta \, \ZZ \, \bb = (\bb^\top \otimes \Delta) \, \zz
\end{equation}
which is reflected in Equation~\ref{eq-def-x-x0}.

Deriving Eq.~\ref{eq-def-Q} from Eq.~\ref{eq-quad-approx} is mostly routine algebra, while noting that for a binary variable $z$, we have $z^2 = z$. Therefore, for a binary vector $\zz$:
\begin{equation}
    \mathbf{u}^\top \zz = \zz^\top \, \mathrm{diag}(\mathbf{u}) \, \zz
\end{equation}

\textbf{Denseness of $\Q$} \quad We prove that, under general conditions, the $\Q$ matrix is fully-dense (Result~\ref{theorem-denseness}).

\begin{proof}
Since the second term on the right-hand side of Eq.~\ref{eq-def-Q} is diagonal, we focus on the first term to prove denseness:
\begin{equation}
    \Am^\top \, \HH_0 \, \Am = \left( \bb^\top \otimes \Delta \right)^\top \, \HH_0 \, \left( \bb^\top \otimes \Delta \right)
\end{equation}
Note that
\begin{equation}
    \bb^\top \otimes \Delta = \begin{bmatrix}
      b_1 \, \Delta & b_2 \, \Delta & \dots & b_M \, \Delta
    \end{bmatrix}
\end{equation}
and therefore
\begin{equation}
    \HH_0 \, \left( \bb^\top \otimes \Delta \right) = \begin{bmatrix}
      b_1 \, \HH_0 \, \Delta & b_2 \, \HH_0 \, \Delta & \dots & b_M \, \HH_0 \, \Delta
    \end{bmatrix} 
\end{equation}
Next, we take advantage of the distributive property of transposition over Kronecker product:
\begin{equation}
    \left( \bb^\top \otimes \Delta \right)^\top = \bb \otimes \Delta^\top = \begin{bmatrix}
      b_1 \, \Delta \\ b_2 \, \Delta \\ \vdots \\ b_M \, \Delta
    \end{bmatrix}
\end{equation}
Combining the above two, we obtain
\begin{equation}\label{eq-aha}
    \Am^\top \, \HH_0 \, \Am = \begin{bmatrix}
      b_1 \, \Delta \\ b_2 \, \Delta \\ \vdots \\ b_M \, \Delta
    \end{bmatrix} \, \begin{bmatrix}
      b_1 \, \HH_0 \, \Delta & b_2 \, \HH_0 \, \Delta & \dots & b_M \, \HH_0 \, \Delta
    \end{bmatrix} = \begin{bmatrix}
      b_1^2 \,\, \Delta \, \HH_0 \, \Delta & b_1 b_2 \,\, \Delta \, \HH_0 \, \Delta & \dots & b_1 b_M \,\, \Delta \, \HH_0 \, \Delta \\
      b_1 b_2 \,\, \Delta \, \HH_0 \, \Delta & b_2^2 \,\, \Delta \, \HH_0 \, \Delta & \dots & b_2 b_M \,\, \Delta \, \HH_0 \, \Delta \\
      \dots & \dots & \ddots & \dots \\
      b_1 b_M \,\, \Delta \, \HH_0 \, \Delta & b_2 b_M \,\, \Delta \, \HH_0 \, \Delta & \dots & b_M^2 \,\, \Delta \, \HH_0 \, \Delta
    \end{bmatrix}
\end{equation}
Finally, note that
\begin{equation}
    (\Delta \, \HH_0 \, \Delta)_{i,j} = \sum_k \sum_{k'} \Delta_{i,k} \, \HH_{0,k,k'} \, \Delta_{k',j} = \delta_i \, \delta_j \, \HH_{0,i,j}
\end{equation}
where we have taken advantage of $\Delta$ being diagonal. Since $\delta_i > 0, \forall i$ and also $b_k > 0, \forall k$, we conclude that $\Am^\top \, \HH_0 \, \Am$ is fully-dense as long as $\HH_0$ is fully dense.
\end{proof}
We also note that, Eq.~\ref{eq-aha} can be easily re-written as
\begin{equation}\label{eq-aHa-2}
    \Am^\top \, \HH_0 \, \Am = (\bb \, \bb^\top) \otimes (\Delta \, \HH_0 \, \Delta)
\end{equation}

\textbf{Discretisation Error} \quad Consider the one-dimensional case of Figure~\ref{fig-discretisation-error}. The worst-case scenario is for two adjacent nodes on the grid to both be the apparent minimums of a discretised, convex function, and for the absolute minimum - on the continuous scale - to lie halfway between the nodes. It is easy to see that the discretisation error, in this case, is $\frac{1}{8} \, \lambda \delta^2$, where $\lambda$ is the (positive) second derivative of the quadratic cost function and $\delta$ is grid resolution. The result below generalises this result to the multivariate case.

\begin{figure}[!h]
\centering
\includegraphics[width=10cm]{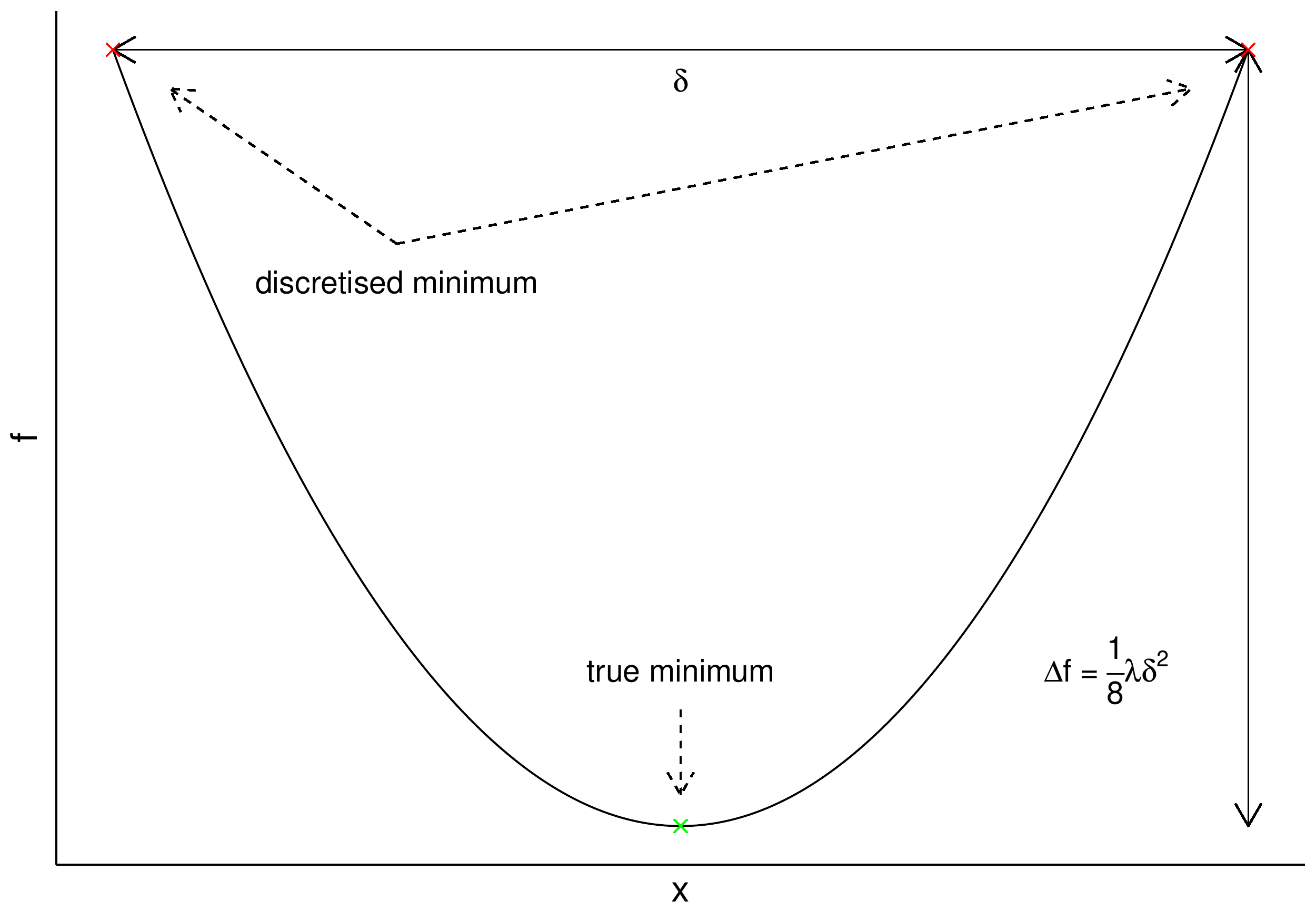}
\caption{Illustration of the upper bound on discretisation error - defined in Result~\ref{theorem-discretisation} - in the univariate case. $\lambda$ is the largest positive eigenvalue of the Hessian, which is also the second derivative of the function resulting from the intersection of the quadratic fit with a vertical plane. See text for details.}
\label{fig-discretisation-error}
\end{figure}

We calculate an upper bound on the error caused by replacing the continuous $\xx$ in Eq.~\ref{eq-quad-approx} with a discrete one per Eq.~\ref{eq-x-n}. The error would be due to the true minimum happening in-between grid points, and at a value lower than the observed minimum.

The worst-case scenario - i.e., maximum error between observed and true minimum - is when we have two neighboring points on the grid both being the observed minimums, and the true minimum lying halfway between them. Since we are minimising a quadratic function, its intersection with any plane is also of quadratic form. In particular, consider that $\xx$ is allowed to move along the direction indicated by the unit vector $\hat{\bbeta}$, around the origin located at $\aalpha$:
\begin{equation}
    \xx(u) = \aalpha + u \, \hat{\bbeta}
\end{equation}
Plugging the above back into Eq.~\ref{eq-quad-approx}, we obtain a quadratic form with the second derivative given by
\begin{equation}\label{eq-quad-proj}
    f''(u) = \sum_k \lambda_k \, \gamma_k^2
\end{equation}
where $\lambda_k$'s are the eigenvalues of $\HH$ and $\gamma_k$'s are the coefficients of $\hat{\bbeta}$ in the eigenbasis of $\HH$, i.e., $\hat{\bbeta} = \sum_k \gamma_k \, \hat{\vv}_k$ with $\hat{\vv}_k$ being the $k$'th eigenvector of $\HH$. To see the above, we simply note that
\begin{equation}
    \xx^\top \, \HH \, \xx = (\aalpha^\top + u \, \hat{\bbeta}^\top) \, \HH \, (\alpha + u \, \hat{\bbeta}) = \left( \hat{\bbeta}^\top \, \HH \, \hat{\bbeta} \right) \, u^2 + \hdots
\end{equation}
Using the eigendecomposition property, $\HH \, \hat{\vv}_k = \lambda_k \, \hat{\vv}_k$ readily leads to Eq.~\ref{eq-quad-proj}.

Noting that, for the unit vector $\hat{\bbeta}$ to have a norm of 1, we must have $\sum_k \gamma_k^2 = 1$, it is easy to conclude that maximum second deriative occurs when $\hat{\bbeta}$ points along the direction of largest positive eigenvalue of $\HH$, which we simply call $\lambda$ for brevity.

On the other hand, for a twice-differentiable cost function, minimums happen where the gradient vector is zero. For the function to have an interior minimum, it cannot be lower by more than $\frac{1}{2} \, \lambda d^2$, where $d$ is the minimum distance of the point from any of the vertices of the grid, which has a resolution vector $\mathbf{\epsilon}$. This minimum distance has a maximum of $\frac{1}{2} \, (\mathbf{\epsilon}^\top \, \mathbf{\epsilon})^{\frac{1}{2}}$. (See lemma below.) Combining the above two proves our result.

\begin{lemma}
Consider a point inside a $K$-dimensional hypercube of unit length. Let's call the minimum distance of the point from all $K$ vertices of the hypercube $d$. Maximum of $d$ is $\frac{\sqrt{K}}{2}$, which occurs when the point is at the center of the hypercube's main diagonal.
\end{lemma}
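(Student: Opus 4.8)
\textit{Proof proposal.} The idea is to reduce the problem to a one-dimensional optimisation, coordinate by coordinate. Place the hypercube at $[0,1]^K$, so that its vertices are precisely the points of $\{0,1\}^K$ (the statement refers to ``$K$ vertices'', but a $K$-cube has $2^K$ of them; the argument is unaffected). For a point $\xx = (x_1,\dots,x_K)$ inside the cube and a vertex $\vv \in \{0,1\}^K$, the squared distance is $\|\xx - \vv\|_2^2 = \sum_{k=1}^K (x_k - v_k)^2$. The key observation is that both this objective and the vertex set $\{0,1\}^K$ are product-structured, so minimising over $\vv$ decouples across coordinates: for each $k$ separately, $\min_{v_k \in \{0,1\}} (x_k - v_k)^2 = \min\{x_k^2, (1-x_k)^2\} = \min\{x_k, 1-x_k\}^2$. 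Hence
\begin{equation}
d(\xx)^2 \;=\; \min_{\vv \in \{0,1\}^K} \|\xx - \vv\|_2^2 \;=\; \sum_{k=1}^K \min\{x_k,\, 1-x_k\}^2 .
\end{equation}

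It then remains to maximise the right-hand side over $\xx \in [0,1]^K$. By separability again, this reduces to maximising the single-variable function $\phi(t) = \min\{t, 1-t\}^2$ on $[0,1]$, which increases on $[0,\half]$ and decreases on $[\half,1]$ and therefore attains its maximum $\phi(\half) = \frac{1}{4}$ uniquely at $t = \half$. Summing over the $K$ coordinates gives $d(\xx)^2 \le K/4$, i.e. $d(\xx) \le \sqrt{K}/2$, with equality exactly when $x_k = \half$ for all $k$ — that is, when $\xx$ is the centre of the hypercube, equivalently the midpoint of its main diagonal. This establishes the claimed maximal value and its unique maximiser.

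I do not anticipate a real obstacle; the only step deserving explicit justification is that the nearest-vertex search separates over coordinates, and that follows immediately from the product form of $\sum_k (x_k - v_k)^2$ and of $\{0,1\}^K$. For completeness — and to match how the lemma is used in the discretisation-error bound — the identical computation applied to an anisotropic grid cell, obtained by rescaling coordinate $k$ by a resolution $\delta_k$, yields a maximal minimal distance of $\half\,(\mathbf{\delta}^\top \mathbf{\delta})^{1/2}$.
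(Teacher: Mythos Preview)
Your proof is correct and follows essentially the same approach as the paper: both exploit the coordinate-wise separability of the squared distance to reduce the inner minimisation and the outer maximisation to $K$ independent one-variable problems, each peaking at $1/2$ with value $1/4$. The paper dresses the inner step in QUBO language (diagonal $\Q = \II_K - 2\,\diag(\yy)$), whereas you state the separation directly; the computations are otherwise identical, and your observation about the ``$K$ vertices'' typo and the anisotropic extension are both apt.
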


\begin{proof}
Without loss of generality, assume that the hypercube vertices having coordinates consisting of $0$'s and $1$'s only, i.e., it lies in the first orthant, its sides are aligned with coordinates, and has one vertex at the origin. The minimum distance of an interior point from the vertices - i.e., the inner optimisation problem - can be cast as a QUBO with the cost function $(\xx - \yy)^\top (\xx - \yy)$, where $\xx$ is the binary vector representing hypercube vertices, and $\yy$ is the location of the interior point. The resulting diagonal $\Q$ matrix associated with this QUBO is $\II_K - 2 \, \diag(\yy)$, and the cost function has a minimum of $\sum_k \min(1 - 2 \, y_k, 0)$. Therefore, the outer optimisation (maximisation) problem has the objective function $\sum_k \left\{ y_k^2 + \min(1 - 2 \, y_k, 0) \right\}$, subject to constraints $0 \leq y_k \leq 1, \,\, \forall k=1,\dots,K$. This can be easily verified to have a maximum of $\frac{\sqrt{K}}{2}$, located at $\yy = \frac{1}{2} \, \onek$. Generalisation to an orthotope or hyperrectangle can be done via a simple rescaling of coordinates.
\end{proof}


\textbf{Bound Constraints} \quad To prove Eq.~\ref{eq-remove-box-constraints-a}, we start with Eq.~\ref{eq-nlp-yspace} and apply the chain rule of derivatives:
\begin{equation}
    \frac{\partial F}{\partial y_m} = \sum_n \frac{\partial f}{\partial x_n} \, \frac{\partial x_n}{\partial y_m} = \sum_n \frac{\partial f}{\partial x_n} \, \eta_m'(y_m) \delta_{m,n} = \frac{\partial f}{\partial x_m} \, \eta_m'(y_m),
\end{equation}
where we have taken advantage of each nonlinear function, $\eta_m$, being a function of $y_m$ only. Consolidating the above term for all $m$'s into a vector produces Eq.~\ref{eq-remove-box-constraints-a}. Taking the derivative of the two sides in the above yields:
\begin{equation}
    \frac{\partial^2 F}{\partial y_n \partial y_m} = \frac{\partial}{\partial y_n} \left( \frac{\partial f}{\partial x_m} \, \eta_m'(y_m) \right) = \frac{\partial}{\partial y_n} \left( \frac{\partial f}{\partial x_m} \right) \, \eta_m'(y_m) + \frac{\partial f}{\partial x_m} \, \frac{\partial}{\partial y_n} \left( \eta_m'(y_m) \right)
\end{equation}
Noting that
\begin{equation}
    \frac{\partial}{\partial y_n} \left( \eta_m'(y_m) \right) = \eta_m''(y_m) \, \delta_{m,n}
\end{equation}
and
\begin{equation}
    \frac{\partial}{\partial y_n} \left( \frac{\partial f}{\partial x_m} \right) = \sum_k \frac{\partial}{\partial x_k} \left( \frac{\partial f}{\partial x_m} \right) \, \frac{\partial x_k}{\partial y_n} = \sum_k \frac{\partial}{\partial x_k} \left( \frac{\partial f}{\partial x_m} \right) \, \eta'_n(y_n) \, \delta_{k,n} = \frac{\partial^2 f}{\partial x_n \partial x_m} \, \eta'_n(y_n).
\end{equation}
Combining the last three equations produces:
\begin{equation}
    \frac{\partial^2 F}{\partial y_n \partial y_m} = \frac{\partial^2 f}{\partial x_n \partial x_m} \, \eta'_m(y_m) \, \eta'_n(y_n) + \frac{\partial f}{\partial x_m} \, \eta''_m(y_m) \, \delta_{m,n}.
\end{equation}
Noting that $\eta'_m(y_m) \, \eta'_n(y_n)$ is simply the $(m,n)$'th element of the matrix $\mathbf{\eta} \, \mathbf{\eta}^\top$ allows to arrive at Eq.~\ref{eq-remove-box-constraints-b}.

In this paper, we use the following specific element-wise nonlinear functions:
\begin{equation}
    x_k = \eta_k(y_k) = 
    \begin{cases}
        y_k , \quad & a_k = -\infty, \,\, b_k = +\infty \\
        a_k + e^{y_k} , \quad & a_k > -\infty, \,\, b_k = +\infty \\
        b_k - e^{y_k} , \quad & a_k = -\infty, \,\, b_k < +\infty \\
        a_k + \frac{b_k - a_k}{1 + e^{-y_k}}, \quad & a_k > -\infty, \,\, b_k < +\infty
    \end{cases}
\end{equation}


\subsection{Biomass Selection Optimisation}\label{subsec-methods-biomass-problem}

\textbf{Parametric Yield Functions} \quad In the cost function of Eq.~\ref{eq-def-cost}, we need to specify the yield functions $\YY(X) = \begin{bmatrix} Y_1(X) & \dots & Y_K(X) \end{bmatrix}^\top$. The functions $Y_k()$ and their first and second derivatives can be described as
\begin{subequations}
\begin{align}
Y(X) &= G_0 \, y(\frac{V}{X}; \mathbf{\theta}) \\
Y'(X) &= G_0 \, (-\frac{V}{X^2}) \, y'(\frac{V}{X}; \mathbf{\theta}) \\
Y''(X) &= G_0 \, (\frac{V}{X^2}) \, \left( (\frac{2}{X}) \, y'(\frac{V}{X}; \mathbf{\theta}) + (\frac{V}{X^2}) \, y''(\frac{V}{X}; \mathbf{\theta}) \right)
\end{align}
\end{subequations}
where $V$ is the active volume of the reactor, thus making $V/X$ the Hydraulic Retention Time (HRT) of the reactor. $G_0$ is the maximum methane produced from a unit (volume) of the biomass, $\theta$ is the vector of (known) parameters of the production model, and $y()$ is the normalised yield curve, approaching $1$ as HRT goes to infinity. We have dropped the subscript $k$ from $y$, $Y$, and $\theta$ in the above to avoid clutter.

As mentioned in the literature~\cite{pererva2020existing}, there have been many parametric models for the BMP experiments proposed over the years, with no clear choice that has been shown to provide the best fit in all cases. In this paper, we test three parametric forms for the yield function: cone, exponential, and Cauchy. To ensure optimisation results are sensible, we selected parametric models where $y(t=0) = 0$, i.e., biomethane is not produced instantaneously.

\textit{Cone}:
\begin{subequations}
\begin{align}
y(t; \, k, n) &= \left[ 1 + (k \, t)^{-n} \right]^{-1} \label{eq-def-cone} \\
y'(t; \, k, n) &= (n \, k) \, (k \, t)^{-(n+1)} \, \left[ 1 + (k \, t)^{-n} \right]^{-2} \\
y''(t; \, k, n) &= (n \, k^2) \, (k \, t)^{-(n+2)} \, \left[ 1 + (k \, t)^{-n} \right]^{-2} \, \left\{ (2 \, n) \, (k \, t)^{-n} \, \left[ 1 + (k \, t)^{-n} \right]^{-1} - (n + 1) \right\}
\end{align}
\end{subequations}
(Note that, in the above, $k$ is one of the parameters of the cone mode, and not the biomass index.)

\textit{Exponential}:
\begin{subequations}
\begin{align}
y(t; \, \tau) &= 1 - e^{-t/\tau} \label{eq-def-exponential} \\
y'(t; \, \tau) &= \frac{1}{\tau} \, e^{-t/\tau} \\
y''(t; \, \tau) &= -\frac{1}{\tau^2} \, e^{-t/\tau}
\end{align}
\end{subequations}

\textit{Cauchy}:
\begin{subequations}
\begin{align}
y(t; \, \tau) &= \frac{2}{\pi} \, \arctan{(\frac{t}{\tau})} \label{eq-def-cauchy} \\
y'(t; \, \tau) &= \frac{2}{\pi \, \tau} \, \frac{1}{1 + (t / \tau)^2} \\
y''(t; \, \tau) &= \frac{-4}{\pi \, \tau^2} \, \frac{t / \tau}{\left( 1 + (t / \tau)^2 \right)^2}
\end{align}
\end{subequations}

To calculate the gradient and Hessian of $f()$, we note that
\begin{subequations}
\begin{align}
& \partial x_{k'} / \partial x_k = \delta_{k,k'}, \\
& \partial X / \partial x_k = 1
\end{align}
\end{subequations}
Applying the chain rule, we get:
\begin{equation}
\frac{\partial f}{\partial x_k} = - c_k + r \, \left( Y_k(X) + \sum_{l=1}^K Y'_{l}(X) \, x_{l} \right)
\end{equation}
with $Y'_k(X) \equiv dY_k/dX$. In vector form:
\begin{equation}
\g_0 = - \cc + r \, \left( \YY + (\YY'^\top \xx) \, \one \right)
\end{equation}
More algebra leads to the following second derivative expression:
\begin{equation}
\frac{\partial^2 f}{\partial x_k \, \partial x_{k'}} = r \, \left\{ Y'_k(X) + Y'_{k'}(X) + \sum_{l=1}^K Y''_l(X) \, x_l \right\}
\end{equation}
In vector form:
\begin{equation}\label{eq-H-constituents}
\HH_0 = r \left\{ \YY' \, \one^\top + \one \, \YY'^\top + (\YY''^\top \xx) \, \mathbf{J} \right\},
\end{equation}
where $\mathbf{J}$ is a $K \times K$ matrix of ones: $\mathbf{J} \equiv \one \, \one^\top$. As expected, the Hessian is symmetric, i.e., $\frac{\partial^2 f}{\partial x_k \, \partial x_{k'}} = \frac{\partial^2 f}{\partial x_{k'} \, \partial x_{k}}$. It is easy to verify that $\HH_0$ is fully-dense since each of the three constituent terms in Eq.\ref{eq-H-constituents} are fully-dense, and there is no reason for their corresponding elements to cancel out in general.

\subsection{D-Wave Quantum Annealers}

For quantum annealing, we use two Quantum Processing Units (QPUs) from D-Wave - accessed via Amazon Braket on AWS: the \emph{DW-2000Q} system (version 6) with 2,048 qubits connected in the `chimera' pattern, and 2) the newer, \emph{Advantage} system with 5,760 qubits, connected in the `pegasus' pattern. \textit{Advantage} has both higher qubit count and more connections per qubit than \textit{DW-2000Q}, thus allowing for `embedding' of larger and denser matrices with shorter chain lengths. For the \textit{DW-2000Q} QPU, version 6 was used in all experiments. For \textit{Advantage} QPU, version 1.1 was used in the single-iteration experiments (local optimisation), while version 4.1 (replacing version 1.1 during our research) was used in the multi-iteration (global optimisation).

For embedding and unembedding, we use two libraries provided by D-Wave, \textit{minorminer} and \textit{clique}, offered as part of the \textit{ocean} Python SDK.


\subsection{Experiments}

Experiments in this paper fall under two categories: single-iteration (SI) experiments (local optimisation), and multi-iteration (MI) experiments (global optimisation). The SI experiments were focused on D-Wave QPUs. (Conducting extensive MI experiments with D-Wave is rather costly, since a single run may require up to 100 individual QA jobs - one per iteration.) Therefore the SI experiments were limited in size to less than $30$ biomasses, and we were able to use real data for them. The MI experiments, on the other hand, included other Ising solvers and a much wider range of problem sizes (up to $K=2000$ biomasses). This necessitated using synthetic data that were made to resemble the real data in the statistical sense. Likewise, initialisation and other aspects have differences between the SI and MI experiments. We describe them below.

\textbf{Data Generation - SI} \quad Biochemical methane potential (BMP) tests were used to determine the methane potential and
biodegradability of the different biomass substrates in batch mode using the Automatic Methane Potential Test System II (AMPTS® II) from Bioprocess Control. The biomass substrates were obtained from various local distributors in Denmark. In the test, a substrate was mixed with a filtrated anaerobic bacteria culture (substrate-to-inoculum ratio 1:4 to 1:2) freshly retrieved from an active primary digester from a local biogas plant in Denmark (NE Midtfyn A/S). Total solids (TS) and volatile solids (VS) were measured on all substrates according to standard procedures. 500 ml bottles containing inoculum and the substrate were kept under thermophilic conditions at a temperature of 52 °C and mixed in a 60 sec on/60 sec off sequence for a period of 30 days. Methane and carbon dioxide are produced during the testing period due to the anaerobic degradation of the organic contents of the substrate. The methane generated from the substrate is then measured and methane production of the substrate is normalised by the mass of volatile solids added (Nml CH 4 /g VS), and after subtracting the methane production from a container of Microcrystalline cellulose, used as control.

The experimental data is fitted with the Cone, Exponential, and Cauchy models (Section~\ref{subsec-methods-biomass-problem}) by minimizing the sum of squared differences between the experimental and calculated values (nonlinear curve fitting).

The cost per fresh weight used in optimisation experiments is the sum of two components: procurement cost and transportation cost. Both components were calculated as average across sourcing and delivery sites, using numbers provided by NE's finance department for the harvest year 2020-2021. Unit revenue of biomethane is based on the 2020 figure provided by NE, and is the sum of market price of methane (1.5 DKK/Nm3) and subsidies (4.5 DKK/Nm3).

\textbf{Data Generation - MI} \quad At a high level, the data-generation process for MI experiments works by 1) fitting multivariate normal distributions (MVNs) to Nature Energy raw data (see above), on a transformed scale, 2) drawing samples from the fitted MVNs and, 3) transforming the samples back to the original scale. This is done separately for each of the three biomethane yield models, i.e., cone, exponential and Cauchy. We illustrate this for the cone model. Steps for exponential and Cauchy models are similar.

We define a `cost margin' parameter, $\alpha = c / (r \times G_0)$, where $c$ is the cost per tonne of fresh weight for the biomass, $r$ is the expected revenue per unit of biomethane produced, and $G_0$ is maximum expected biomethane released per fresh tonne of biomass (when HRT approaches infinity). We used $r=6.0\,DKK/Nm^3$ throughout the paper, where $DKK$ stands for Danish Krone. We apply the following transformation to NE data:
\begin{equation}\label{eq-datagen-fwd}
    (\alpha', G_0', n', k') = (\log(\frac{\alpha}{1 - \alpha}), \log(G_0), \log(n), \log(k))
\end{equation}
Next, we fit a 4-dimensional MVN to the NE data across all biomasses. After drawing samples - each sample being a synthetic biomass - from the fitted distribution, we reverse-transform the samples to match the forward-transform in Eq.~\ref{eq-datagen-fwd}. Since for some biomasses the cost is zero (thus causing the transformed value to become infinite), we remove them from the data used for fitting MVNs. Note that we need $K$ samples to support a simulation that uses $K$ biomasses. The samples, i.e., biomasses, are generated independently from one another.

Finally, in order to avoid extreme properties for the synthetic biomasses, we perform univariate optimisation on individual biomasses in each synthetic dataset, and reject those biomasses where the univariate optimal value falls outside the interval, $[0.01, 100]$.

\textbf{Initialisation - SI} \quad For panel D of Figure~\ref{fig-biomass-diversity}, number of biomass types was given values of 3, 9, 15, 21 (increments of 6). For each value, $1000$ runs was conducted. In each run, a random subset - including a random order - of biomass types from NE database were selected. The first value for $\xx_0$ was chosen to be the univariate optimal value, i.e., the value that would maximise the reactor profit (unit volume) assuming that only that single biomass type was fed into the reactor. For all other biomass types, their corresponding entry in $\xx_0$ was set to zero.

For Figure~\ref{fig-default-perf}, number of biomass types was given values of 10, 15, 21, 25. (The last value was only used with the \textit{Advantage} QPU, due to the physical limitations of the \textit{DW-2000Q} QPU.) For each combination of QPU, biomass type count and QA tuning parameters (chain strength multiplier and number of samples), we conducted 10 runs. In each run, we used the biomass with maximum univariate profitability as our `anchor' biomass (`Deep Bedding - Chicken'). Next, we randomly selected the remaining biomass types, designating the first of those as `contaminating' biomass and the rest as `nuisance' biomass. All nuisance biomass types were given a value of zero for their $\xx_0$ entry. For anchor biomass, its daily rate was selected from a uniform distribution centered on its univariate optimal value, with a range of $\pm 0.15$. For contaminating biomass, its daily rate was selected from a uniform distribution with a minimum of $0.0$ and a maximum of $0.15$. With this setup, as we increase problem size (number of biomass types), the true minimum of the cost function within our rectangular neighborhood is always located at the univariate optimal point of the anchor biomass. Initial point is suboptimal due to 1- random deviations of anchor biomass from optimal value, 2- non-zero value for contaminating biomass. As we increase problem size (number of nuisance biomass types), QA will be tested to `ignore' the nuisance dimensions and find the true minimum.

For Figure~\ref{fig-advantage-sensitivity}, same 10 problems corresponding to the previous figure and 21 biomasses were used.

\textbf{Initialisation - MI} \quad The starting point for all MI experiments was chosen such that 1) the resulting HRT was always 10 days, and 2) all biomasses would have equal daily volume. In other words, $\xx_0 = \frac{1}{10K} \onek$ (which causes $X_0 = \sum_k x_{0,k} = 1/10$). As a reminder, $\xx$ represents daily feed rate of biomasses \textit{per unit reactor volume}.

\textbf{Software and Hardware Environments} \quad Implementation of the core QuAnCO algorithm was done in R~\cite{R2022Core}, with exact and SA Ising solvers written in C++ due to performance requirements. Python scripts were used for interfacing with Amazon Braket (D-Wave quantum annealers) and Hitachi's CMOS annealing API. For TRN, the `trust' package in R was used~\cite{geyer2020trust}, while the `optim' function in the core R library was used for BFGS and CG methods.

\textit{D-Wave} \quad For the \textit{DW-2000Q} QPU, version 6 was used. For the \textit{Advantage} QPU, version 1.1 was used for SI experiments, while version 4.1 was used for MI experiments.

\textit{Toshiba} \quad We used version 1.2.3, available on AWS Marketplace, running on a p3.2xlarge instance.

\textit{Hitachi} \quad V2 of the annealing cloud web API was used. We tested device numbers 4 (GPU, 32-bit, float) and 5 (ASIC, 4-bit).

\textbf{Tuning Parameters - SI} \quad For generating Figure~\ref{fig-default-perf}, we used the default parameters listed in Table~\ref{table-dwave-parameters}, which also describes how the tuning parameters were changed to generate each of the six panels of Figure~\ref{fig-advantage-sensitivity}. The value of `chain strength multiplier', which was varied in Figure~\ref{fig-default-perf} to form the x-axis, was fixed at 0.008 for generating Figure~\ref{fig-advantage-sensitivity}.

\begin{table}[!h]
\scriptsize
\centering
\begin{tabular}{@{}llllllll@{}}
\toprule
Parameter                      & Range/Values & 1 & 2 & 3 & 4 & 5 & 6 \\ \midrule
Embedding library  & M, C & \textbf{M,C} & M & <- & <- & <- & <- \\
Number of reads             & 1-1e4 & 1e3 & \textbf{1e2,1e3,1e4} & 1e2 & 1e3 & 1e2 & <- \\
Annealing time ($\mu sec$)              & 1-2e3 & 2e1 & <- & \textbf{2e1, 2e2, 2e3} & 20 & <- & <- \\
Programming thermalisation ($\mu sec$)  & 0-1e4 & 1e3 & <- & <- & \textbf{1e1,1e2,1e3,1e4} & 1e3 & <- \\
Readout thermalisation ($\mu sec$) & 0-10,000 & 0 & <- & <- & <- & \textbf{0,50,250,1250,6250} & 0 \\
Spin reversal transforms & 0 - <number of reads> & 0 & <- & <- & <- & <- & \textbf{0,1,5,25,100} \\
\bottomrule
\end{tabular}
\caption{List of key control parameters for D-Wave QPUs. For definitions, see Supp Mat. Columns 1-6 indicate the values of each parameter used for each experiment in Figure~\ref{fig-advantage-sensitivity}. Values in bold correspond to the parameter whose value was varied in that experiment. The symbol `<-' means same value as prior experiment was used. Abbreviations: M = minorminer, C = clique. The value of chain strength multiplier parameter was fixed at 0.008 in all experiments reported in Figure~\ref{fig-advantage-sensitivity}, based on results reported in Figure~\ref{fig-default-perf}.}
\label{table-dwave-parameters}
\end{table}

\textbf{Tuning Parameters - MI} \quad For D-wave (Figure~\ref{fig-dwave-vs-trn-vs-sa}), the default values listed in Table~\ref{table-dwave-parameters} were used, with the following overrides: `chain strength multiplier' (not listed in the table) was set to 0.008, and `number of reads' was 100.

\textit{Simulated Annealing} \quad Initial inverse temperature: 0.1; final inverse temperature: 3.0; number of sweeps (from initial to final temperature): 100; number of independent samples: 10 (except for panel J of Figure~\ref{fig-trn-vs-qtro-sa} where it was given values of 1, 10, 100). For temperature trajectory over sweeps, we applied a log-linear regime, i.e., logarithm of inverse temperature was grown linearly from initial to final value.

\textit{Toshiba SBM} \quad Number of loops: 1; `prefer': auto, `dt': 1.0; C: auto selected.

\textit{Hitachi CMOS annealer} \quad Initial temperature: 10.0; final temperature: 0.01; number of steps: 10; step length: 100; chain strength multiplier: 0.008.

\textbf{Performance Metrics} \quad For single-iteration experiments, we define `suboptimality' of the results as the final value of the cost function (returned by QuAnCO) minus the true minimum. The units are the same as the cost function, which is $DKK/day/m^3$, reflecting the net daily cost (in Danish Krone) of biomethane production (biomass cost minus methane revenue), per unit reactor volume.

For multi-iteration experiments, we define `normalised cost' as the difference between the final cost and true minimum, divided by the difference between initial cost (at $\xx_0$) and the true minimum. Since in QuAnCO, as in all TR algorithms, we are guaranteed to improve or stay the same in every iteration, the normalised cost metric will always fall between 0.0 and 1.0. A normalised cost of 0.0 means the algorithm found the true minimum, while a value of 1.0 means no improvement was made compared to the initial value.

\section{Supplementary Material}\label{sec-supp-mat}

\subsection{Quantum Annealing}

QA is based on the adiabatic theorem which, in simplified terms, states that a quantum mechanical system starting in an eigenstate - e.g., ground state - of a slowly-changing Hamiltonian will remain in its corresponding eigenstate during system evolution. A transverse-field Ising-model implementation starts out the system - consisting of a collection of interacting quantum bits or qubits - in the initial/tunneling Hamiltonian with a trivial ground state, and slowly introduces the final/problem Hamiltonian, corresponding to the QUBO being minimised. By the end of annealing, only the problem Hamiltonian remains, at which point the qubit spins are `measured' and the classical bits corresponding to the minimum energy - or a value close to it - are returned:
\begin{equation}
    \mathcal{H}_{ising}(s) = - \frac{A(s)}{2} \left( \sum_i \hat{\sigma}_x^{(i)} \right) + \frac{B(s)}{2} \left( \sum_i h_i \, \hat{\sigma}_z^{(i)} + \sum_{i > j} J_{i,j} \, \hat{\sigma}_z^{(i)} \, \hat{\sigma}_z^{(j)} \right)
\end{equation}
In the above, $s$ is the normalised annealing time varying between 0 and 1. $A(s)$ and $B(s)$ control the relative strength of initial and final Hamiltonians, such as $A(0) >> B(0)$ and $A(1) << B(1)$. Quantum annealing is similar to (classical) simulated annealing (SA), with the key difference being that state transitions in QA are due to quantum tunneling rather than thermal fluctuations in SA.

Next, we present a brief overview of a few important tuning parameters in D-Wave's QAs.

\textbf{Number of reads} or samples determines how many anneal-readout cycles QPU should perform. Collecting more than one sample is important since various sources of error may cause a single anneal cycle to deviate from the ground state of the intended problem Hamiltonian. The best result among all samples is often chosen as the final output from QA solver. Time-series of sample energies can be examined for evidence of noise accumulation, e.g., due to insufficient heat dissipation after each readout.

\textbf{Embedding library} refers to the algorithm for mapping the problem graph - corresponding to $\Q$ - to the QPU connectivity graph, such that the problem graph becomes a minor of the QPU graph. The goal is to create indirect connections between vertices that cannot be directly connected due to sparseness of the QPU graph. Two libraries are provided by D-Wave: a heuristic, general-purpose library called minorminer embedding~\cite{cai2014practical}, and a specialised one for embedding fully-connected graphs, called clique embedding~\cite{boothby2016fast}. Note that embedding itself can be time consuming, especially for large graphs, but for fully-dense matrices it can be pre-calculated once for each graph size.

\textbf{Chain strength} controls the relative importance of chain-coherence penalty term vs. the original QUBO cost function. Low values lead to frequent chain breaks and hence inconsistent qubit values representing a single logical bit, while high values lead to mis-utilisation of limited dynamic range of QPU coupling strengths and insensitivity of final solutions to the desired cost function. D-Wave offers two heuristics for setting chain strength: 1) \textit{scaled}, which sets chain strength equal to maximum absolute value of $\Q$ entries, and 2) a so-called \textit{uniform torque compensation} method.

\textbf{Chain break resolution} determines the logic for resolving inconsistent bits within a chain. Options offered by D-wave are: 1) discard any samples with any broken chain, 2) fix each broken chain by taking majority vote among chain bits, 3) use weighed random choice, 4) minimize local energy. For dense, high-dimensional problems involving many chains, most samples are bound to have at least one chain break and hence the discard strategy is not practical in those cases.

\textbf{Annealing time} determines the speed with which the Hamiltonian evolves from the tunneling term to the problem term (see above). Previous research~\cite{jansen2007bounds,lidar2009adiabatic} indicates that smaller annealing times are needed to maintain performance for problems with small minimum spectral gap. However, increased annealing time has an obvious computational cost.

\textbf{Programming thermalisation} defines the length of delay (in microseconds) after `programming' the QPU with the $\Q$ matrix values, and before starting the annealing cycles. This delay allows for dissipation from generated in the programming process. This is done only once at the beginning of the anneal-readout cycles. For both \textit{DW-2000Q} and \textit{Advantage} devices, the acceptable range of this parameter is 0-10,000 microseconds, with a default value of 1,000 microseconds.

\textbf{Readout thermalisation} defines the length of time (in microseconds) at the end of each anneal-readout cycle to pause before starting the next anneal. This allows for heat removal caused by reading out the qubit values. For both \textit{DW-2000Q} and \textit{Advantage} devices, the acceptable range for this parameter is 0-10,000 microseconds, with a default value of zero.

\textbf{Post-processing} refers to strategies for improving quality of solutions returned by QPU, e.g., by doing a local search in the neighborhood of each sample returned by the QPU, using classical solvers.

\textbf{Spin reversal transforms} helps reduce the effect of programming biases and errors, but requires programming the QPU for each sample, hence adding to total sampling time and amount of heat generated in the device.

\subsection{Sub-Problem Solver in QuAnCO}
Figure~\ref{fig-qnlp-overview} illustrates the steps involved in solving the TR sub-problem of the QuAnCO algorithm in details. The full QuAnCO is listed in Algorithm~\ref{alg-qtro}.

\begin{figure}[!h]
\centering
\includegraphics[width=\linewidth]{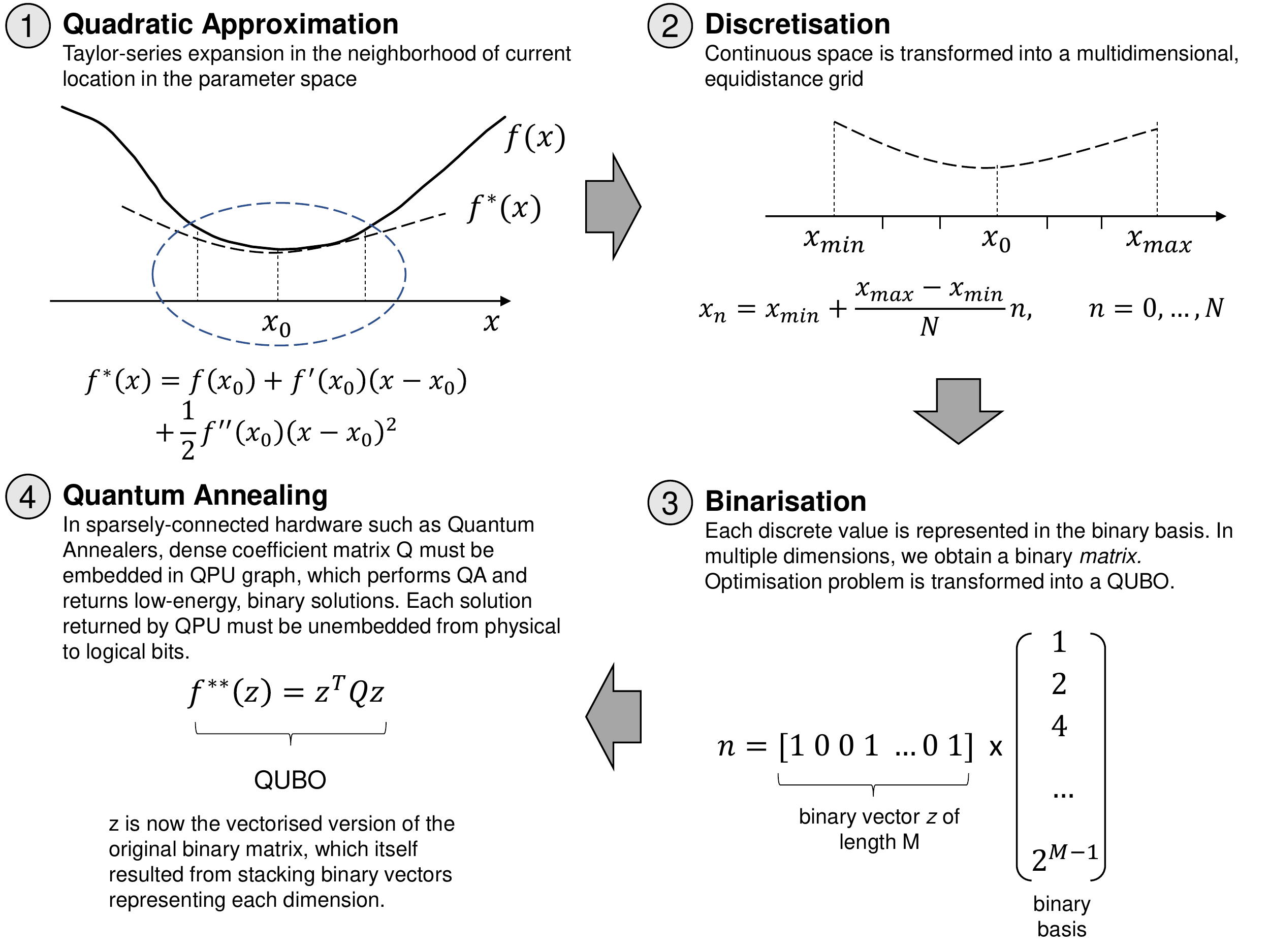}
\caption{Steps involved in solving the TR sub-problem in the QuAnCO algorithm.}
\label{fig-qnlp-overview}
\end{figure}


\subsection{TRN vs. BFGS \& CG}

Figure~\ref{fig-supp-trn-vs-bfgs-boxplots} shows comparison plots for TRN vs. BFGS/CG for problem sizes $K=20,200,2000$ and BMP models cone/exponential/Cauchy. While for $K=20$, CG performs as well as - or better - than TRN in terms of final solution quality, the order is reversed for larger problems, i.e., $K=200,2000$.

\begin{figure}[!h]
\centering
\includegraphics[width=12cm]{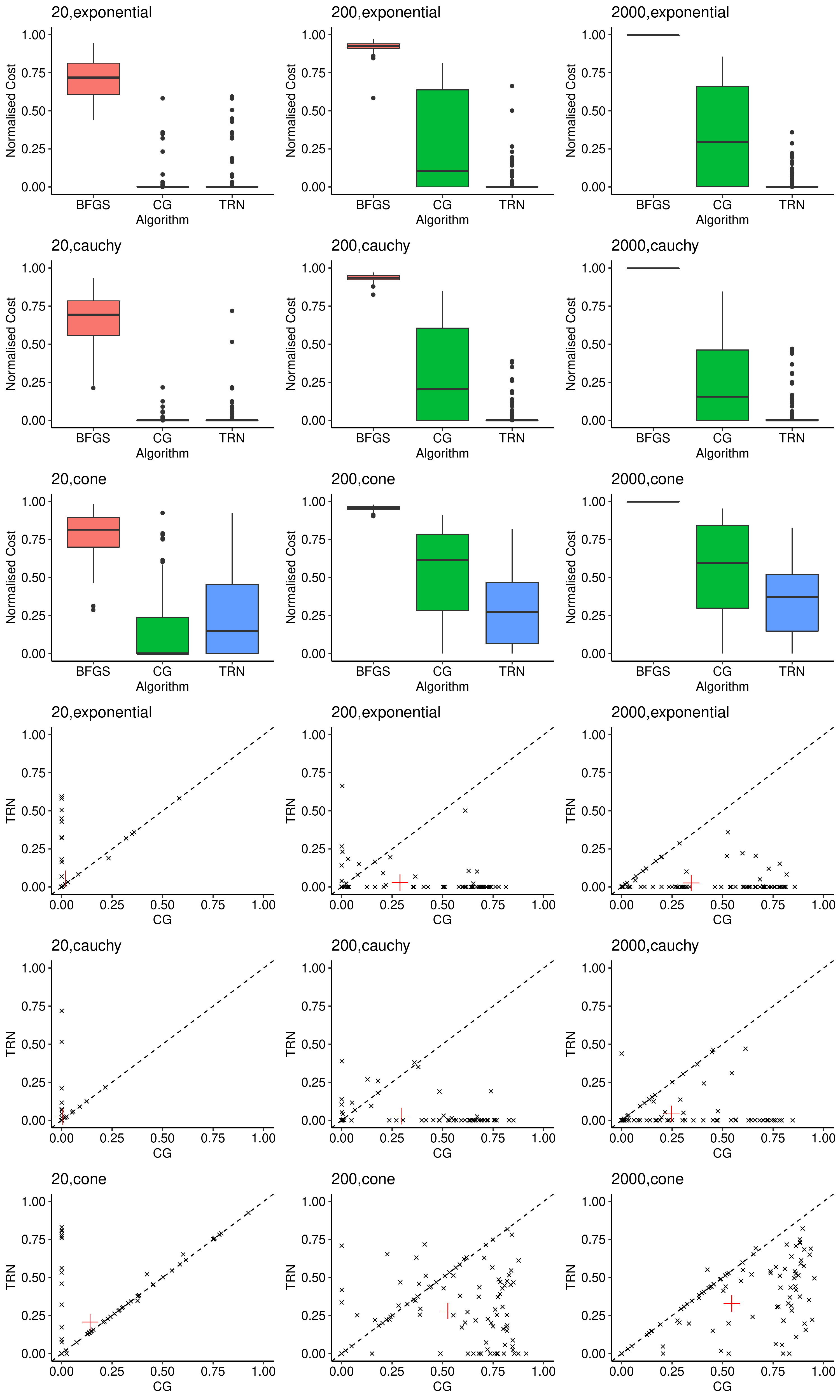}
\caption{Comparing solution quality for TRN vs. BFGS and CG over a range of problem sizes, $K=20,200,2000$, and for three biomethane yield models of cone, exponential and Cauchy. Top three rows show mean normalised cost (after convergence or 100 iterations, whichever comes first). Bottom three rows compare TRN vs. CG, also based on final mean normalised cost numbers across 100 runs.}
\label{fig-supp-trn-vs-bfgs-boxplots}
\end{figure}

\subsection{QuAnCO-Exact vs. TRN}

Figure~\ref{fig-supp-qtro-brute-vs-trn-1} shows comparison plots for QuAnCO-Exact vs. TRN for $K=3,5,7,20$ and $M=1,2,3$ (only $M=1$ for $K=20$), and three biomethane yield models. The QuAnCO-Exact final solution (iter 100) is as good as TRN (exponential and Cauchy) or better (cone).

\begin{figure}[!h]
\centering
\includegraphics[width=14cm]{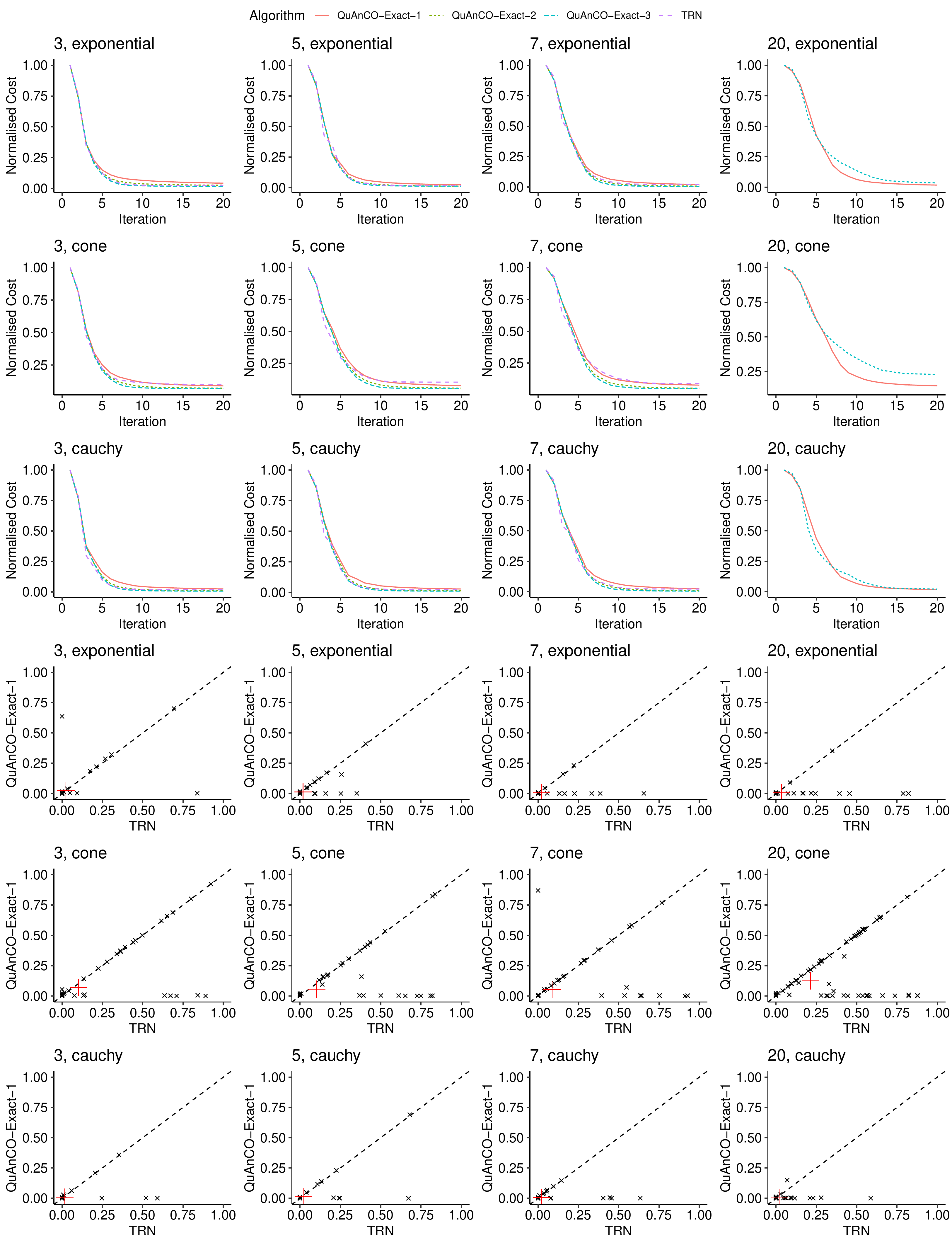}
\caption{Comparing solution quality for QuAnCO-Exact vs. TRN, for problem sizes $K=3,5,7,20$. For first three sizes, we used $M=1,2,3$ bits per dimension, while for $K=20$, only $M=1$ was tried. Top three rows show mean normalised cost vs. iteration number. Bottom three rows show final (iter 100) normalised cost numbers for QuAnCO-Exact-1 vs. TRN.}
\label{fig-supp-qtro-brute-vs-trn-1}
\end{figure}

\subsection{Graph Embedding and Chain Strength}

Figure~\ref{fig-embedding} shows a summary of embedding performance for D-Wave (Chimera, Pegasus) and Hitachi (King) graphs. We make a few observations: 1- Average chain lengths are close for the same graph type, using \textit{clique} and \textit{minorminer} embedding libraries, 2- The Chimera graph requires longer chains, on average, compared to Pegasus, which is due to its sparser connectivity. Similarly, the King graph has a much average chain length than both D-Wave devices, 3- Embedding can become time-consuming, even taking longer than the optimisation itself. However, for fully-connected matrices, embedding can be done once in advance and re-used, and 4- Embedding can fail long before the total number of bits used reaches the maximum graph size, due to connectivity limitations, 4- While \textit{Clique} embedding is much more time-consuming than \textit{minorminer} for the \textit{Pegasus} graph, the reverse is true for the \textit{Chimera} graph.

\begin{figure}[!h]
\centering
\includegraphics[width=\linewidth]{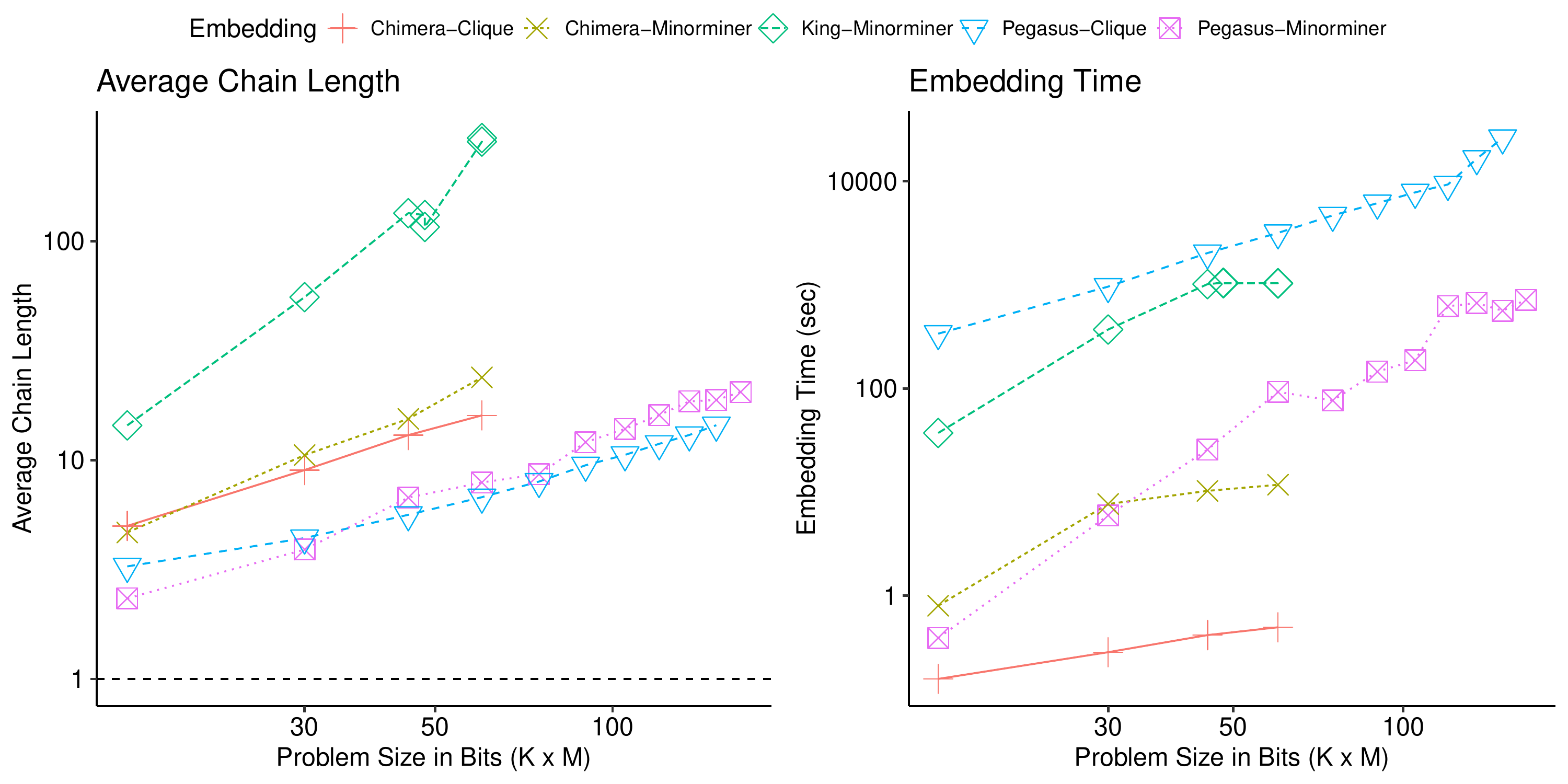}
\caption{Embedding of fully-dense matrices in sparsely-connected graphs. Left: Log-log plot of average chain length vs. problem size in bits. The Chimera and Pegasus graphs correspond to D-Wave's \textit{DW-2000Q} QPU and \textit{Advantage} QPUs, respectively, while the King graph corresponds to Hitachi's CMOS annealer. Horizontal line indicates the ideal case where only one physical bit per logical bit is needed, regardless of problem size. The \textit{minorminer} embedding library from D-Wave can be used on all devices, while the \textit{clique} embedding library is only designed for Chimera and Pegasus graphs. Right: Log-log plot of embedding time (in seconds) for each combination of graph and embedding library, across different problem sizes.}
\label{fig-embedding}
\end{figure}



\subsection{QuAnCO using Quantum-Inspired Ising Solvers}
We tested two quantum-inspired Ising solvers: 1) the Simulated Bifurcation Machine (SBM) from Toshiba~\cite{goto2019combinatorial}, and 2) the CMOS annealer from Hitachi~\cite{yamaoka201520k}.

The SBM in based on simulating adiabatic evolutions of classical nonlinear Hamiltonian systems exhibiting bifurcation phenomena, and takes advantage of the inherent parallelism in system update equations, using GPUs or FPGAs. We tested SBM using the publicly-available GPU-based instance via Amazon AWS~\cite{SBM2022AWS}. This version has a limit of 1000 fully-connected bits, and we tested it with $K=500$ and $M=2$. No embedding and unembedding is needed. Similar to QA and SA, each iteration is based on multiple, independent samples. This number - in our experiments - was auto-selected by the SBM engine, with typical values being in the 300-400 range.

The CMOS annealer is a non-von Neumann computer, transferring data from SRAM directly to calculations. Its implementation on commodity CMOS circuits can provide manufacturing scalability and low power consumption, making it an attractive option for IoT devices. It is available for public testing via a web API~\cite{CMOS2022web}. In addition to the ASIC implementation ($384$ bits), a GPU-based simulator ($512$ bits) has also been provided for public access, as of this writing. Both versions are connected in King's graph pattern, thus requiring embedding and unembedding, similar to D-Wave's QAs. We tested $K=24$, $M=2$ using both GPU and ASIC options, referred to as device numbers 4 and 5, respectively. For the ASIC version, due to limited dynamic range of the device, the weights must be rescaled to fall between -7 and +7.

Figure~\ref{fig-suppmat-sbm} shows the results. QuAnCO-SBM-2 has a similar - and slightly better - performance than QuAnCO-SA-2, both clearly outperforming TRN for the cone problem and $K=500$. As the scatter plot confirms (top right panel), the QuAnCO-SBM-2 advantage is nearly risk-free. On the other hand, the ASIC version of CMOS annealer significantly underperforms TRN for $K=24$, while the GPU version performs nearly identical to TRN.

\begin{figure}[!h]
\centering
\includegraphics[width=\linewidth]{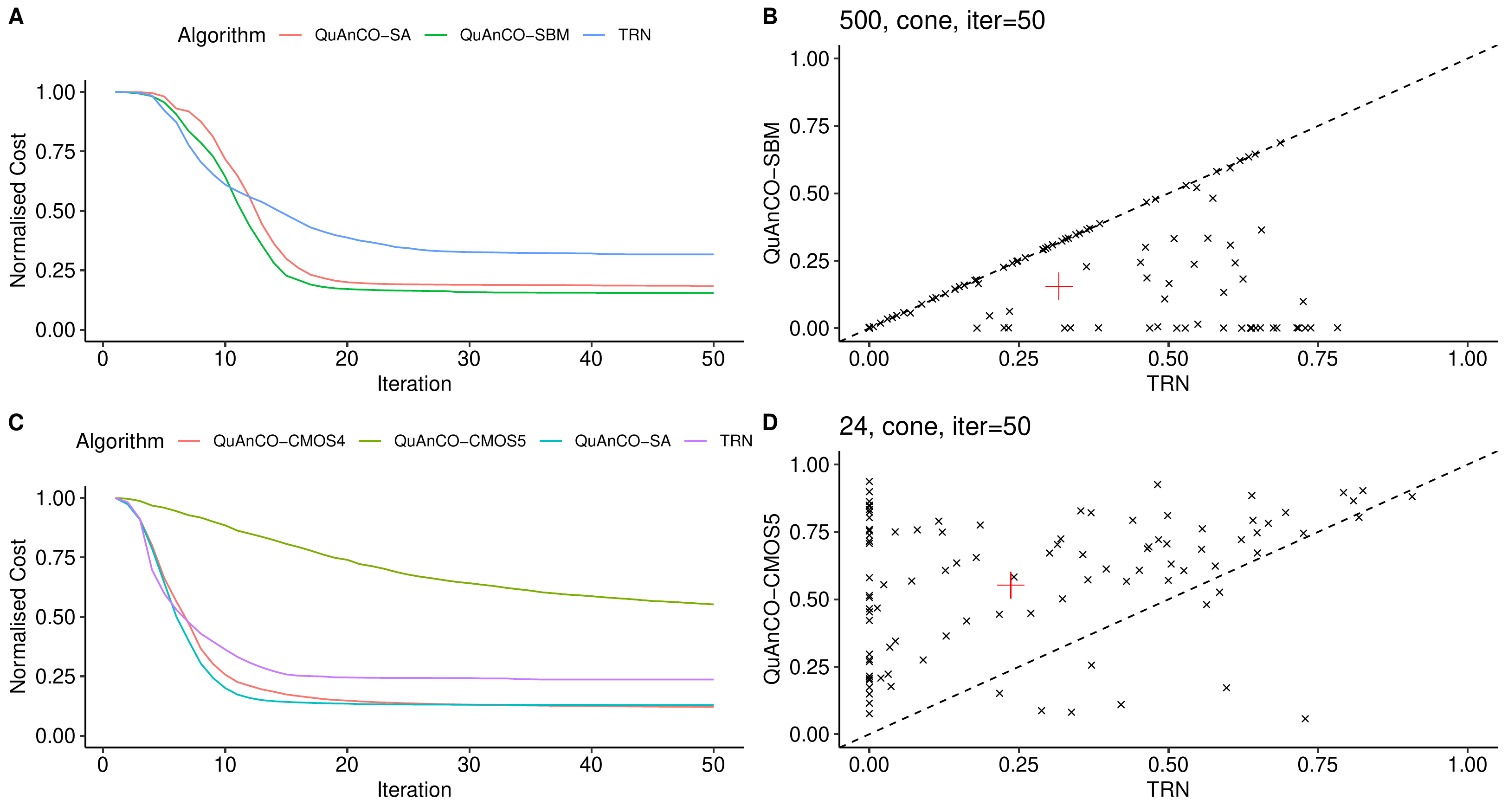}
\caption{Performance of QuAnCO using quantum-inspired Ising solvers, compared to TRN and QuAnCO-SA. Top Row: Simulated Bifurcation Machine from Toshiba. Bottom Row: CMOS annealer from Hitachi (device 4: GPU, device 5: ASIC). All QuAnCO algorithms use $M=2$ bits per dimension. The cone model of biomethane yield was used. Experiments lasted 50 iterations (or until convergence for TRN).}
\label{fig-suppmat-sbm}
\end{figure}

\subsection{Handling Nonlinear Constraints}\label{subsec-suppmat-constraints}
A combination of penalty terms and change-of-variables can be used to absorb nonlinear equality and inequality constraints. Below we describe the outline of such a strategy, starting with equality constraints.

\textbf{Equality Constraint} \quad Consider the following optimisation problem with a nonlinear equality constraint:
\begin{equation}\label{eq-optim-nonlinear-const-equality}
\begin{aligned}
\min_{\xx} \quad & f(\xx) \\
\textrm{s.t.} \quad & g(\xx) = a.
\end{aligned}
\end{equation}
We replace the equality constraint with a quadratic penalty term:
\begin{equation}\label{eq-optim-nonlinear-const-equality-2}
\begin{aligned}
\min_{\xx} \quad & f(\xx) + \lambda \, (g(\xx) - a)^2.
\end{aligned}
\end{equation}
The larger $\lambda$ is, the stricter the equality constraint is enforced. It is typical to start with a small $\lambda$, and use the solution found as a starting point to increase $\lambda$ in the next round, until the equality constrained is satisfied to the desired level of precision. Handling multiple equality constraints requires introducing several such penalty terms, each with their corresponding $\lambda$.

\textbf{Inequality Constraint} \quad Let's consider a single box constraint:
\begin{equation}\label{eq-optim-nonlinear-const-inequality}
\begin{aligned}
\min_{\xx \in \R^K} \quad & f(\xx) \\
\textrm{s.t.} \quad & a \leq g(\xx) \leq b.
\end{aligned}
\end{equation}
We introduce a new variable, $z$, along with a penalty term to enforce the resulting equality constraint, which results in a bound-constrained optimisation problem:
\begin{equation}\label{eq-optim-nonlinear-const-inequality-2}
\begin{aligned}
\min_{\xx' \in \R^{K+1}} \quad & f(\xx) + \lambda \, (g(\xx) - z)^2 \\
\textrm{s.t.} \quad & a \leq z \leq b.
\end{aligned}
\end{equation}
In the above, $\xx'$ is a concatenation of $\xx$ and $z$, i.e., $\xx'=[\xx^\top \, z]^\top$. We can now use the nonlinear transformation technique described in the main text (Section~\ref{subsec-trn-bound}) to handle the bound constraint. Note that for the first $K$ elements of $\xx'$, corresponding to $\xx$, we do not need any nonlinearity used, i.e., $\eta_k(\xx) = 1, \, \forall k=1,\cdots,K$. Further inequality constraints can be added in a similar fashion, and this can be combined with the equality-constraint approach above.

Full mathematical derivations, loop strategies needed for dialing $\lambda$'s up/down, and experiments on effectiveness of this approach are all topics for future research.

\section{Data Availability}

The data that support the findings of this study are available from the corresponding author upon reasonable request, and subject to Nature Energy's approval.

\section{Code Availability}

Computer code implementing the QuAnCO algorithm is available from the corresponding author upon reasonable request. 

\bibliography{main}




\section{Author contributions statement}


M.T.A.S developed the mathematical framework and contributed to software development. V.B.J. collected biogas data and parameters. M.J.  supplied the biogas production assumptions and constraints. A.S.M. contributed to problem definition and software development. All authors reviewed the manuscript.







\end{document}